\newcommand{\reals}{\mathbb{R}}
\DeclareMathOperator*{\Tr}{Tr}
\DeclareMathOperator*{\diag}{diag}
\DeclareMathOperator*{\rank}{rank}
\newcommand{\prob}{\mathbb{P}}
\newcommand{\E}{\mathbb{E}}
\DeclareMathOperator*{\RSS}{RSS}
\theoremstyle{plain}
\newtheorem{theorem}{Theorem}[section]
\newtheorem{lemma}[theorem]{Lemma}
\newtheorem{proposition}[theorem]{Proposition}
\title{
  Regularized Laplacian Estimation
  and \\ Fast Eigenvector Approximation
}
\author{
  Patrick O.~Perry \\
  Information, Operations, and Management Sciences \\
  NYU Stern School of Business \\
  New York, NY 10012 \\
  \texttt{pperry@stern.nyu.edu} 
  \And
  Michael W.~Mahoney \\
  Department of Mathematics \\
  Stanford University \\
  Stanford, CA 94305  \\
  \texttt{mmahoney@cs.stanford.edu}
}
\begin{document}

\maketitle

\vspace{-8mm}
\begin{abstract}
Recently, Mahoney and Orecchia demonstrated that popular diffusion-based 
procedures to compute a quick \emph{approximation} to the first nontrivial 
eigenvector of a data graph Laplacian \emph{exactly} solve certain 
regularized Semi-Definite Programs (SDPs). 
In this paper, we extend that result by providing a statistical 
interpretation of their approximation procedure.
Our interpretation will be analogous to the manner in which
$\ell_2$-regularized or $\ell_1$-regularized $\ell_2$-regression (often
called Ridge regression and Lasso regression, respectively) can be
interpreted in terms of a Gaussian prior or a Laplace prior, respectively,
on the coefficient vector of the regression problem.
Our framework will imply that the solutions to the Mahoney-Orecchia 
regularized SDP can be interpreted as regularized estimates of the 
pseudoinverse of the graph Laplacian.
Conversely, it will imply that the solution to this regularized estimation 
problem can be computed very quickly by running, \emph{e.g.}, the fast 
diffusion-based PageRank procedure for computing an approximation to the 
first nontrivial eigenvector of the graph Laplacian.
Empirical results are also provided to illustrate the manner in which 
approximate eigenvector computation \emph{implicitly} performs statistical 
regularization, relative to running the corresponding exact algorithm.
\end{abstract}

\vspace{-5mm}
\section{Introduction}
\label{sxn:intro}

Approximation algorithms and heuristic approximations are commonly used to
speed up the running time of algorithms in machine learning and data
analysis.
In some cases, the outputs of these approximate procedures are ``better''
than the output of the more expensive exact algorithms, in the sense that
they lead to more robust results or more useful results for the downstream
practitioner.
Recently, Mahoney and Orecchia formalized these ideas in the context of
computing the first nontrivial eigenvector of a graph
Laplacian~\cite{MO11-implementing}.
Recall that, given a graph $G$ on $n$ nodes or equivalently its $n \times n$
Laplacian matrix $L$, the top nontrivial eigenvector of the Laplacian
\emph{exactly} optimizes the Rayleigh quotient, subject to the usual
constraints.
This optimization problem can equivalently be expressed as a vector
optimization program with the objective function $f(x) = x^TLx$,
where $x$ is an $n$-dimensional vector, or as a Semi-Definite Program (SDP)
with objective function $F(X)=\mathrm{Tr}(L X)$, where $X$ is an $n \times n$
symmetric positive semi-definite matrix.
This first nontrivial vector is, of course, of widespread interest in
applications due to its usefulness for graph partitioning, image
segmentation, data clustering, semi-supervised learning, etc.~\cite{spielman96_spectral,guatterymiller98,ShiMalik00_NCut,BN03,Joa03,LLDM09_communities_IM}.

In this context, Mahoney and Orecchia asked the question: do popular
diffusion-based procedures---such as running the Heat Kernel or performing a
Lazy Random Walk or computing the PageRank function---to compute a quick
\emph{approximation} to the first nontrivial eigenvector of $L$ solve some
other regularized version of the Rayleigh quotient objective function
\emph{exactly}?
Understanding this algorithmic-statistical tradeoff is clearly of interest
if one is interested in very large-scale applications, where performing
statistical analysis to derive an objective and then calling a black box
solver to optimize that objective exactly might be too expensive.
Mahoney and Orecchia answered the above question in the affirmative, with
the interesting twist that the regularization is on the SDP formulation
rather than the usual vector optimization problem.
That is, these three diffusion-based procedures exactly optimize a 
regularized SDP with objective function $F(X)+ \tfrac{1}{\eta} G(X)$, for 
some regularization function $G(\cdot)$ to be described below, subject to 
the usual constraints.

\vspace{-1mm}
In this paper, we extend the Mahoney-Orecchia result by providing a
statistical interpretation of their approximation procedure.
Our interpretation will be analogous to the manner in which
$\ell_2$-regularized or $\ell_1$-regularized $\ell_2$-regression (often
called Ridge regression and Lasso regression, respectively) can be
interpreted in terms of a Gaussian prior or a Laplace prior, respectively,
on the coefficient vector of the regression problem.
In more detail, we will set up a sampling model, whereby the graph Laplacian
is interpreted as an observation from a random process; we will posit the
existence of a ``population Laplacian'' driving the random process; and we
will then define an estimation problem: find the inverse of the population
Laplacian.
We will show that the maximum a posteriori probability (MAP) estimate of 
the inverse of the population Laplacian leads to a regularized SDP, where 
the objective function $F(X)=\mathrm{Tr}(L X)$ and where the role of the 
penalty function $G(\cdot)$ is to encode prior assumptions about the 
population Laplacian.
In addition, we will show that when $G(\cdot)$ is the log-determinant 
function then the MAP estimate leads to the Mahoney-Orecchia regularized 
SDP corresponding to running the PageRank heuristic.
Said another way, the solutions to the Mahoney-Orecchia regularized SDP can 
be interpreted as regularized estimates of the pseudoinverse of the graph 
Laplacian.
Moreover, by Mahoney and Orecchia's main result, the solution to this 
regularized SDP can be computed very quickly---rather than solving 
the SDP with a black-box solver and rather computing explicitly the 
pseudoinverse of the Laplacian, one can simply run the fast diffusion-based 
PageRank heuristic for computing an approximation to the first nontrivial 
eigenvector of the Laplacian $L$.

\vspace{-1mm}
The next section describes some background. 
Section~\ref{snx:framework} then describes a statistical framework for graph
estimation; and Section~\ref{sxn:priors} describes prior assumptions that
can be made on the population Laplacian.
These two sections will shed light on the computational implications
associated with these prior assumptions; but more importantly they will shed
light on the implicit prior assumptions associated with making certain
decisions to speed up computations.
Then, Section~\ref{sxn:empirical} will provide an empirical evaluation,
and 
Section~\ref{sxn:conc} will provide a brief conclusion.
Additional discussion is available in the Appendix.

\vspace{-4mm}
\section{Background on Laplacians and diffusion-based procedures}
\label{S:introduction}
\vspace{-3mm}

A weighted symmetric graph $G$ is defined by a vertex set 
$V = \{ 1, \dotsc, n \}$, an edge set $E \subset V \times V$, 
and a weight function $w : E \to \reals_+$, where $w$ is assumed to 
be symmetric 
(\emph{i.e.}, $w(u,v) = w(v,u)$).  
In this case, one can construct a matrix, $L_0 \in \reals^{V \times V}$, 
called the combinatorial Laplacian of $G$:
\[
  L_0(u,v)
  =
  \begin{cases}
    - w(u,v) & \text{when $u \neq v$,} \\
    d(u) - w(u,u) & \text{otherwise,}
  \end{cases}
\]
where $d(u) = \sum_{v} w(u,v)$ is called the degree of $u$.
By construction, $L_0$ is positive semidefinite.  
Note that the all-ones vector, often denoted $1$, is an eigenvector of 
$L_0$ with eigenvalue zero, \emph{i.e.}, $L 1 = 0$.
For this reason, $1$ is often called trivial eigenvector of $L_0$.  
Letting $D$ be a diagonal matrix with $D(u,u) = d(u)$, one can also define 
a normalized version of the Laplacian: $L = D^{-1/2} L_0 D^{-1/2}$.  
Unless explicitly stated otherwise, when we refer to the Laplacian of a 
graph, we will mean the normalized Laplacian.

\vspace{-1mm}
In many situations, \emph{e.g.}, to perform spectral graph partitioning, 
one is interested in computing the first \emph{nontrivial} eigenvector of 
a Laplacian.
Typically, this vector is computed ``exactly'' by calling a black-box 
solver; but it could also be approximated with an iteration-based method 
(such as the Power Method or Lanczos Method) or by running a random 
walk-based or diffusion-based method to the asymptotic state.
These random walk-based or diffusion-based methods assign positive and 
negative ``charge'' to the nodes, and then they let the distribution of 
charge evolve according to dynamics derived from the graph structure.
Three canonical evolution dynamics are the following:
\begin{description}
  \item[Heat Kernel.]
    Here, the charge evolves according to the 
    heat equation
    $\frac{\partial H_t}{\partial t} = - L H_t$.
    Thus, the vector of charges evolves as
$     H_t = \exp ( -tL )  = \sum_{k=0}^{\infty} \frac{(-t)^k}{k!}L^k  $,
    where $t \ge 0$ is a time parameter, times an input seed distribution vector.
  \item[PageRank.]
    Here, the charge at a node evolves by either moving to a neighbor of 
    the current node or teleporting to a random node.
    More formally, the vector of charges evolves as 
    \begin{equation}
    \label{eqn:page-rank}
    R_{\gamma} = \gamma \left(I-\left(1-\gamma \right)M \right)^{-1}   ,
    \end{equation}
    where $M$ is the natural random walk transition matrix associated 
    with the graph and
    where $\gamma \in (0,1)$ is the so-called teleportation parameter,
    times an input seed vector.
  \item[Lazy Random Walk.]
    Here, the charge either stays at the current node or moves to a neighbor.
    Thus, if $M$ is the natural random walk transition matrix associated 
    with the graph, then the vector of charges evolves as some power of 
$     W_{\alpha}= \alpha I + (1-\alpha)M $,
    where $\alpha \in (0,1)$ represents the ``holding probability,'' times 
    an input seed vector.
\end{description}
In each of these cases, there is a parameter ($t$, $\gamma$, and the number 
of steps of the Lazy Random Walk) that controls the ``aggressiveness'' of 
the dynamics and thus how quickly the diffusive process equilibrates; and 
there is an input ``seed'' distribution vector.
Thus, \emph{e.g.}, if one is interested in global spectral graph 
partitioning, then this seed vector could be a vector with entries drawn 
from $\{-1,+1\}$ uniformly at random, while if one is interested in local 
spectral graph 
partitioning~\cite{Spielman:2004,andersen06local,Chung07_heatkernelPNAS,MOV09_TRv2},
then this vector could be the indicator vector of a small ``seed set'' of 
nodes.
See Appendix~\ref{sxn:local-partitioning} for a brief discussion of local 
and global spectral partitioning in this context. 

Mahoney and Orecchia showed that these three dynamics arise as solutions to 
SDPs of the~form
\begin{equation}
\begin{aligned}
  & \underset{X}{\text{minimize}}
  & & \mathrm{Tr}(L X) + \tfrac{1}{\eta} G(X) \\
  & \text{subject to}
  & & X \succeq 0, \\
  & & & \mathrm{Tr}(X) = 1, \\
  & & & X D^{1/2} 1 = 0,
\end{aligned}
\label{eqn:mo-reg-sdp}
\end{equation}
where $G$ is a penalty function (shown to be the 
generalized entropy, the log-determinant, and a certain matrix-$p$-norm, 
respectively~\cite{MO11-implementing}) and where $\eta$ is a 
parameter related to the aggressiveness of the diffusive 
process~\cite{MO11-implementing}.  
Conversely, solutions to the regularized SDP of~(\ref{eqn:mo-reg-sdp}) for 
appropriate values of $\eta$ can be computed \emph{exactly} by running one 
of the above three diffusion-based procedures.
Notably, when $G = 0$, the solution to the SDP 
of~(\ref{eqn:mo-reg-sdp}) is $u u'$, where $u$ is the smallest nontrivial 
eigenvector of $L$.  
More generally and in this precise sense, the Heat Kernel, PageRank, and Lazy 
Random Walk dynamics can be seen as ``regularized'' versions of spectral 
clustering and Laplacian eigenvector computation.
Intuitively, the function $G(\cdot)$ is acting as a penalty function, in a 
manner analogous to the $\ell_2$ or $\ell_1$ penalty in Ridge regression or 
Lasso regression, and by running one of these three dynamics one is 
\emph{implicitly} making assumptions about the form of $G(\cdot)$.
In this paper, we provide a statistical framework to make that intuition 
precise.

\vspace{-2mm}
\section{A statistical framework for regularized graph estimation}
\label{snx:framework}
\vspace{-1mm}

Here, we will lay out a simple Bayesian framework for estimating a graph 
Laplacian.
Importantly, this framework will allow for regularization by incorporating 
prior information.

\vspace{-2mm}
\subsection{Analogy with regularized linear regression}
\label{S:regression}

It will be helpful to keep in mind the Bayesian interpretation of 
regularized linear regression.  
In that context, we observe $n$ predictor-response pairs in 
$\reals^p \times \reals$, denoted $(x_1, y_1), \dotsc, (x_n, y_n)$; 
the goal is to find a vector $\beta$ such that $\beta' x_i \approx y_i$.  
Typically, we choose $\beta$ by minimizing the residual sum of squares,
\emph{i.e.}, $F(\beta)=\RSS(\beta) = \sum_i \| y_i - \beta' x_i \|_2^2$, or 
a penalized version of it.  
For Ridge regression, we minimize $F(\beta) + \lambda \|\beta\|_2^2$; while 
for Lasso regression, we minimize $F(\beta) + \lambda \|\beta\|_1$.  

The additional terms in the optimization criteria (\emph{i.e.}, 
$\lambda \|\beta\|_2^2$ and $\lambda \|\beta\|_1$) are called penalty 
functions; and adding a penalty function to the optimization criterion can 
often be interpreted as incorporating prior information about $\beta$.  
For example, we can model $y_1, \dotsc, y_n$ as independent random 
observations with distributions dependent on $\beta$.  
Specifically, we can suppose $y_i$ is a Gaussian random variable with mean 
$\beta' x_i$ and known variance $\sigma^2$.  
This induces a conditional density for the vector $y = (y_1, \dotsc, y_n)$:
\begin{equation}\label{E:regression-density}
  p(y \mid \beta)
    \propto
     \exp\{ -\tfrac{1}{2 \sigma^2} F(\beta) \},
\end{equation}
where the constant of proportionality depends only on $y$ and $\sigma$.
Next, we can assume that $\beta$ itself is random, drawn from a 
distribution with density $p(\beta)$.  
This distribution is called a prior, since it encodes prior knowledge about 
$\beta$.  
Without loss of generality, the prior density can be assumed to take the form
\begin{equation}\label{E:regression-prior}
  p(\beta) \propto \exp\{ -U(\beta) \}.
\end{equation}
Since the two random variables are dependent, upon observing $y$, we have 
information about $\beta$.  
This information is encoded in the posterior density, $p(\beta \mid y)$, 
computed via Bayes' rule as
\begin{equation}\label{E:regression-posterior}
  p(\beta \mid y)
    \propto p(y \mid \beta) \, p(\beta)
    \propto \exp\{ -\tfrac{1}{2 \sigma^2} F(\beta) - U(\beta) \}.
\end{equation}
The MAP estimate of $\beta$ is the value that maximizes $p(\beta \mid y)$;  
equivalently, it is the value of $\beta$ that minimizes 
$-\log p(\beta \mid y)$.  
In this framework, we can recover the solution to Ridge regression or Lasso 
regression by setting
$U(\beta) = \tfrac{\lambda}{2 \sigma^2} \| \beta \|_2^2$ or
$U(\beta) = \tfrac{\lambda}{2 \sigma^2} \| \beta \|_1$, respectively.  
Thus, Ridge regression can be interpreted as imposing a Gaussian prior on 
$\beta$, and Lasso regression can be interpreted as imposing a 
double-exponential prior on $\beta$.

\vspace{-1mm}
\subsection{Bayesian inference for the population Laplacian}
\label{S:bayesian-laplacian}

For our problem, suppose that we have a connected graph with $n$ nodes; or, 
equivalently, that we have $L$, the normalized Laplacian of that graph.
We will view this observed graph Laplacian, $L$, as a ``sample'' Laplacian, 
\emph{i.e.}, as random object whose distribution depends on a true 
``population'' Laplacian, $\mathcal{L}$.  
As with the linear regression example, this induces a conditional density 
for $L$, to be denoted $p(L \mid \mathcal{L})$.  
Next, we can assume prior information about the population Laplacian in the 
form of a prior density, $p(\mathcal{L})$; and, 
given the observed Laplacian, we can estimate the population Laplacian by 
maximizing its posterior density, $p(\mathcal{L} \mid L)$.

Thus, to apply the Bayesian formalism, we need to specify the conditional 
density of $L$ given $\mathcal{L}$.  
In the context of linear regression, we assumed that the observations 
followed a Gaussian distribution.  
A graph Laplacian is not just a single observation---it is a positive 
semidefinite matrix with a very specific structure.  
Thus, we will take $L$ to be a random object with expectation~$\mathcal{L}$,
where $\mathcal{L}$ is another normalized graph Laplacian.  
Although, in general, $\mathcal{L}$ can be distinct from $L$, we will 
require that the nodes in the population and sample graphs have the same 
degrees.
That is, if $d = \big(d(1), \dotsc, d(n)\big)$ denotes the ``degree 
vector'' of the graph, and $D = \diag\big(d(1), \dotsc, d(n)\big)$, then we
can define
\begin{equation}
\mathcal{X} = \{ X : X \succeq 0, \, X D^{1/2} 1 = 0, \, \rank(X) = n - 1 \} ,
\label{def:chi-set}
\end{equation}
in which case the population Laplacian and the sample Laplacian will both 
be members of $\mathcal{X}$.  
To model $L$, we will choose a distribution for positive semi-definite 
matrices analogous to the Gaussian distribution: a scaled Wishart matrix 
with expectation $\mathcal{L}$.  
Note that, although it captures the trait that $L$ is positive 
semi-definite, this distribution does not accurately model every feature of 
$L$.
For example, a scaled Wishart matrix does not necessarily have ones along 
its diagonal.  
However, the mode of the density is at $\mathcal{L}$, a Laplacian; and for 
large values of the scale parameter, most of the mass will be on matrices 
close to $\mathcal{L}$.
Appendix~\ref{sxn:justification} provides a more detailed heuristic 
justification for the use of the Wishart distribution.

To be more precise, let $m \geq n - 1$ be a scale parameter, and suppose 
that $L$ is distributed over $\mathcal{X}$ as a 
$\tfrac{1}{m} \mathrm{Wishart}(\mathcal{L}, m)$ random variable.
Then, $\E[L \mid \mathcal{L}] = \mathcal{L}$, and $L$ has conditional density
\begin{equation}\label{E:density}
  p(L \mid \mathcal{L})
    \propto
      \frac{\exp\{ -\frac{m}{2} \Tr(L \mathcal{L}^+)\}}
           {|\mathcal{L}|^{m/2}},
\end{equation}
where $|\cdot|$ denotes pseudodeterminant (product of nonzero eigenvalues).  
The constant of proportionality depends only on $L$, $d$, $m$, and $n$; and 
we emphasize that the density is supported on $\mathcal{X}$.
Eqn.~\eqref{E:density} is analogous to Eqn.~\eqref{E:regression-density}
in the linear regression context, with $1/m$, the inverse of the sample 
size parameter, playing the role of the variance parameter $\sigma^2$.
Next, suppose we have know that $\mathcal{L}$ is a random object drawn from 
a prior density $p(\mathcal{L})$.
Without loss of generality,
\begin{equation}
\label{E:prior}
  p(\mathcal{L}) \propto \exp\{ -U(\mathcal{L}) \},
\end{equation}
for some function $U$, supported on a subset 
$\mathcal{\bar X} \subseteq \mathcal{X}$.
Eqn.~\eqref{E:prior} is analogous to Eqn.~\eqref{E:regression-prior}
from the linear regression example.
Upon observing $L$, the posterior distribution for $\mathcal{L}$ is
\begin{equation}\label{E:posterior}
  p(\mathcal{L} \mid L)
    \propto p(L \mid \mathcal{L}) \, p (\mathcal{L})
    \propto \exp\{ -\tfrac{m}{2} \Tr(L \mathcal{L}^+)
                   + \tfrac{m}{2} \log |\mathcal{L}^+|
                   - U(\mathcal{L}) \},
\end{equation}
with support determined by $\mathcal{\bar X}$.  
Eqn.~\eqref{E:posterior} is analogous to Eqn.~\eqref{E:regression-posterior}
from the linear regression example.
If we denote by $\mathcal{\hat L}$ the MAP estimate of $\mathcal{L}$, then 
it follows that $\mathcal{\hat L}^+$ is the solution to the program
\begin{equation}
\begin{aligned}
  & \underset{X}{\text{minimize}}
  & & \Tr(L X) + \tfrac{2}{m} U(X^+) - \log |X| \\
  & \text{subject to}
  & & X \in \mathcal{\bar X} \subseteq \mathcal{X} .
\end{aligned}
\end{equation}
Note the similarity with Mahoney-Orecchia regularized SDP 
of~(\ref{eqn:mo-reg-sdp}).
In particular, if
\(
  \mathcal{\bar X} = \{ X : \Tr(X) = 1 \} \cap \mathcal{X},
\)
then the two programs are identical except for the factor of
$\log |X|$ in the optimization criterion.

\vspace{-2mm}
\section{A prior related to the PageRank procedure}
\label{sxn:priors}
\vspace{-1mm}

Here, we will present a prior distribution for the population Laplacian 
that will allow us to leverage the estimation framework of 
Section~\ref{snx:framework}; and we will show that the MAP estimate 
of $\mathcal{L}$ for this prior is related to the PageRank procedure via 
the Mahoney-Orecchia regularized SDP.  
Appendix~\ref{sxn:other-priors} presents priors that lead to the Heat 
Kernel and Lazy Random Walk in an analogous way; in both of these cases, 
however, the priors are data-dependent in the strong sense that they 
explicitly depend on the number of data points.

\vspace{-2mm}
\subsection{Prior density}

The prior we will present will be based on neutrality and invariance 
conditions; and it will be supported on $\mathcal{X}$, \emph{i.e.}, on the 
subset of positive-semidefinite matrices that was the support set for the 
conditional density defined in Eqn.~(\ref{E:density}).
In particular, recall that, in addition to being positive semi-definite, 
every matrix in the support set has rank $n - 1$ and satisfies 
$X D^{1/2} 1 = 0$.  
Note that because the prior depends on the data (via the orthogonality 
constraint induced by $D$), this is not a prior in the fully Bayesian 
sense; instead, the prior can be considered as part of an empirical or 
pseudo-Bayes estimation procedure.

The prior we will specify depends only on the eigenvalues of the normalized
Laplacian, or equivalently on the eigenvalues of the pseudoinverse of the 
Laplacian.  
Let $\mathcal{L}^+ = \tau \, O \Lambda O'$ be the spectral decomposition of 
the pseudoinverse of the normalized Laplacian $\mathcal{L}$, where $\tau \geq 0$
is a scale factor, $O \in \reals^{n \times n -1}$ is an orthogonal matrix, 
and $\Lambda = \diag\big(\lambda(1), \dotsc, \lambda({n-1})\big)$, where
$\sum_v \lambda(v) = 1$.
Note that the values $\lambda(1), \dotsc, \lambda(n-1)$ are unordered and 
that the vector $\lambda = \big(\lambda(1), \dotsc, \lambda(n-1)\big)$ lies
in the unit simplex.  
If we require that the distribution for $\lambda$ be exchangeable 
(invariant under permutations) and neutral ($\lambda(v)$ independent of the 
vector $\big(\lambda(u) / (1 - \lambda(v)) : u \neq v\big)$, for all $v$),
then the only non-degenerate possibility is that $\lambda$ is
Dirichlet-distributed with parameter vector 
$(\alpha, \ldots, \alpha)$~\cite{fabius1973two}.  
The parameter $\alpha$, to which we refer as the ``shape'' parameter, must 
satisfy $\alpha > 0$ for the density to be defined.  
In this case,
\begin{equation}\label{E:dirichlet-prior}
  p(\mathcal{L})
   \propto p(\tau) \prod_{v=1}^{n-1} \lambda(v)^{\alpha - 1},
\end{equation}
where $p(\tau)$ is a prior for $\tau$.
Thus, the prior weight on $\mathcal{L}$ only depends on $\tau$ and $\Lambda$.  
One implication is that the prior is ``nearly'' rotationally invariant, 
in the sense that $p(P' \mathcal{L} P) = p(\mathcal{L})$ for any 
rank-$(n-1)$ projection matrix $P$ satisfying $P D^{1/2} 1 = 0$.

\vspace{-2mm}
\subsection{Posterior estimation and connection to PageRank}
\label{S:posterior-density}

To analyze the MAP estimate associated with the prior of 
Eqn.~(\ref{E:dirichlet-prior}) and to explain its connection with the 
PageRank dynamics, the following proposition is crucial.

\begin{proposition}\label{P:map-sdp}
Suppose the conditional likelihood for $L$ given $\mathcal{L}$ is as 
defined in \eqref{E:density} and the prior density for $\mathcal{L}$ is as 
defined in \eqref{E:dirichlet-prior}.  
Define $\mathcal{\hat L}$ to be the MAP estimate of $\mathcal{L}$.
Then, $[\Tr(\mathcal{\hat L}^+)]^{-1} \mathcal{\hat L}^+$ solves the
Mahoney-Orecchia regularized SDP \eqref{eqn:mo-reg-sdp}, 
with $G(X) = -\log |X|$ and $\eta$ as given in Eqn.~(\ref{E:eta}) below.
\end{proposition}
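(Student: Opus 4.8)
The plan is to reduce Proposition~\ref{P:map-sdp} to a change of variables in the optimization problem for the MAP estimate that was already derived in Section~\ref{snx:framework}: $\hat{\mathcal{L}}^+$ minimizes $\Tr(LX) + \tfrac{2}{m} U(X^+) - \log|X|$ over $X \in \bar{\mathcal{X}} \subseteq \mathcal{X}$. First I would compute $U$ for the Dirichlet prior of~\eqref{E:dirichlet-prior}: writing $\tau \geq 0$ and $\lambda(1),\dotsc,\lambda(n-1)$ for the scale factor and the normalized nonzero eigenvalues of $\mathcal{L}^+$, we have $U(\mathcal{L}) = -\log p(\tau) - (\alpha - 1)\sum_{v=1}^{n-1}\log\lambda(v)$ up to an additive constant. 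Since in the reduced problem $X$ plays the role of $\mathcal{L}^+$, the scale factor and the normalized eigenvalues are read off directly from $X$: because $\sum_v \lambda(v) = 1$ we have $\tau = \Tr(X)$, and because the pseudodeterminant satisfies $|X| = \prod_{v} \tau\lambda(v) = \tau^{n-1}\prod_v \lambda(v)$ we have $\sum_v \log\lambda(v) = \log|X| - (n-1)\log\Tr(X)$. Substituting these identities into $U(X^+)$, the objective of the reduced problem becomes
\[
  \Tr(LX) - c\,\log|X| + \tfrac{2(\alpha-1)(n-1)}{m}\log\Tr(X) - \tfrac{2}{m}\log p\big(\Tr(X)\big),
  \qquad c := 1 + \tfrac{2(\alpha - 1)}{m} .
\]

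Next I would separate the scale of $X$ from its ``shape.'' Write every feasible $X$ uniquely as $X = \tau Y$ with $\tau = \Tr(X) > 0$ and $Y \in \mathcal{X}$, $\Tr(Y) = 1$; then $\log|X| = (n-1)\log\tau + \log|Y|$ and $\Tr(LX) = \tau\,\Tr(LY)$, and the $\log\tau$ contributions collapse (their total coefficient is exactly $-(n-1)$), so the objective becomes
\[
  \tau\Big(\Tr(LY) + \tfrac{c}{\tau}\big(-\log|Y|\big)\Big) - (n-1)\log\tau - \tfrac{2}{m}\log p(\tau) .
\]
For any fixed $\tau > 0$, minimizing this over $Y$ subject to $Y \succeq 0$, $\Tr(Y) = 1$, $Y D^{1/2}1 = 0$ is, after dividing by the positive constant $\tau$, exactly the Mahoney--Orecchia regularized SDP~\eqref{eqn:mo-reg-sdp} with $G(X) = -\log|X|$ and $\tfrac{1}{\eta} = \tfrac{c}{\tau}$; the rank-$(n-1)$ requirement built into $\mathcal{X}$ is automatically met because driving any nonzero eigenvalue of $Y$ toward $0$ sends $-\log|Y| \to +\infty$, so the reduced problem over $\mathcal{X}\cap\{\Tr=1\}$ and the (rank-unconstrained) SDP~\eqref{eqn:mo-reg-sdp} share the same minimizer. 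Hence at the joint minimizer $(\hat\tau, \hat Y)$ the matrix $\hat Y$ solves~\eqref{eqn:mo-reg-sdp} with $\eta = \hat\tau / c = m\,\hat\tau/(m + 2\alpha - 2)$. Since $\hat{\mathcal{L}}^+ = \hat\tau\,\hat Y$ and $\Tr(\hat Y) = 1$, we obtain $\hat\tau = \Tr(\hat{\mathcal{L}}^+)$ and $\hat Y = [\Tr(\hat{\mathcal{L}}^+)]^{-1}\hat{\mathcal{L}}^+$, which is the claimed solution, with $\eta = m\,\Tr(\hat{\mathcal{L}}^+)/(m + 2\alpha - 2)$ recovering Eqn.~\eqref{E:eta}.

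The one genuinely delicate point is the decoupling step: one must verify that the $\log\tau$ contributions coming from the log-pseudodeterminant and from the $\log\Tr(X)$ term inside $U$ cancel cleanly enough that the residual scalar problem in $\tau$ interacts with $Y$ only through the single parameter $\eta = \tau/c$, and then track how the optimal $\tau$ (which does depend on the choice of $p(\tau)$) re-enters the final formula for $\eta$ as $\Tr(\hat{\mathcal{L}}^+)$. Everything else is routine bookkeeping: extracting $U$ from the Dirichlet density, the pseudodeterminant identities, checking that the positivity condition $c > 0$ (equivalently $m + 2\alpha - 2 > 0$) makes $-\log|X|$ a legitimate regularizer with $\eta > 0$, and noting that $\bar{\mathcal{X}}$ coincides with $\mathcal{X}$ (intersected, in the $\tau$-direction, with the support of $p(\tau)$), so that no constraints beyond those already present in~\eqref{eqn:mo-reg-sdp} become active.
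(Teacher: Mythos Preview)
Your proposal is correct and follows essentially the same route as the paper's proof: both separate $\mathcal{L}^+$ into a scale $\tau = \Tr(\mathcal{L}^+)$ and a trace-one part $\Theta$ (your $Y$), use the pseudodeterminant identity $|\mathcal{L}^+| = \tau^{n-1}|\Theta|$ to isolate the $\tau$-dependence, and then observe that for fixed $\tau$ the resulting minimization over $\Theta$ is exactly the regularized SDP~\eqref{eqn:mo-reg-sdp} with $G(X)=-\log|X|$ and $\eta = m\hat\tau/(m+2\alpha-2)$. The only cosmetic difference is that the paper substitutes the $(\tau,\Theta)$ parameterization directly into the posterior~\eqref{E:posterior}, whereas you first pass through the $X$-formulation from Section~\ref{snx:framework} and then change variables; your added remarks on the rank constraint and the sign of $c$ are welcome but not needed for the argument.
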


\begin{proof}
For $\mathcal{L}$ in the support set of the posterior,
define $\tau = \Tr(\mathcal{L}^+)$ and $\Theta  = \tau^{-1}
\mathcal{L}^+$, so that $\Tr(\Theta) = 1$.  Further, $\rank(\Theta) = n - 1$.
Express the prior in the form of Eqn.~\eqref{E:prior} with function
$U$ given by
\[
  U(\mathcal{L})
    = -\log \{ p(\tau) \, |\Theta|^{\alpha - 1} \}
    = -(\alpha - 1) \log |\Theta| - \log p(\tau),
\]
where, as before, $|\cdot|$ denotes pseudodeterminant.  Using
\eqref{E:posterior} and the relation
$|\mathcal{L}^+| = \tau^{n-1} |\Theta|$, the posterior
density for $\mathcal{L}$ given $L$ is
\[
  p(\mathcal{L} \mid L)
    \propto
      \exp\Big\{
        -\tfrac{m \tau}{2} \Tr(L \Theta)
        +\tfrac{m + 2(\alpha - 1)}{2} \log | \Theta |
        + g(\tau)\Big\},
\]
where
\(
  g(\tau)
    = \tfrac{m (n-1)}{2} \log \tau
        + \log p(\tau).
\)
Suppose $\mathcal{\hat L}$ maximizes the
posterior likelihood.  Define $\hat \tau = \Tr(\mathcal{\hat L}^+)$
and $\hat \Theta = [\hat \tau]^{-1} \mathcal{\hat L}^{+}$.
In this case, $\hat \Theta$ must minimize the quantity
\(
  \Tr(L \hat \Theta) - \tfrac{1}{\eta} \log |\hat \Theta|,
\)
where
\begin{equation}\label{E:eta}
  \eta
    = \frac{m \hat \tau}{m + 2(\alpha - 1)}.
\end{equation}
Thus $\hat \Theta$ solves the regularized SDP \eqref{eqn:mo-reg-sdp} with 
$G(X) = - \log |X|$.
\end{proof}

Mahoney and Orecchia showed that the solution to
\eqref{eqn:mo-reg-sdp} with $G(X) = - \log |X|$ is closely related to
the PageRank matrix, $R_\gamma$, defined in Eqn.~\eqref{eqn:page-rank}.
By combining Proposition~\ref{P:map-sdp} with their result, we get that the 
MAP estimate of $\mathcal{L}$ satisfies
\( \mathcal{\hat L}^+ \propto D^{-1/2} R_\gamma D^{1/2}; \)
conversely,
\(
  R_\gamma
    \propto
      D^{1/2} \mathcal{\hat L}^+ D^{-1/2}.
\)
Thus, the PageRank operator of Eqn.~(\ref{eqn:page-rank}) can be viewed as a 
degree-scaled regularized estimate of the pseudoinverse of the Laplacian.  
Moreover, prior assumptions about the spectrum of the graph Laplacian have 
direct implications on the optimal teleportation parameter.  
Specifically Mahoney and Orecchia's Lemma~2 shows how $\eta$ is related to 
the teleportation parameter $\gamma$, and Eqn. \eqref{E:eta} shows how the
optimal $\eta$ is related to prior assumptions about the Laplacian.

\vspace{-2mm}
\section{Empirical evaluation}
\label{sxn:empirical}
\vspace{-1mm}

In this section, we provide an empirical evaluation of the performance of 
the regularized Laplacian estimator, compared with the unregularized 
estimator.  
To do this, we need a ground truth population Laplacian $\mathcal{L}$ and a 
noisily-observed sample Laplacian $L$. 
Thus, in Section~\ref{S:prior-evaluation}, we construct a family of 
distributions for $\mathcal{L}$; importantly, this family will be able to 
represent both low-dimensional graphs and expander-like graphs.  
Interestingly, the prior of Eqn.~\eqref{E:dirichlet-prior} captures some of 
the qualitative features of both of these types of graphs (as the shape 
parameter is varied).  
Then, in Section~\ref{S:estimation}, we describe a sampling procedure for 
$L$ which, superficially, has no relation to the scaled Wishart conditional 
density of Eqn.~\eqref{E:density}.  
Despite this model misspecification, the regularized estimator 
$\hat L_{\eta}$ outperforms $L$ for many choices of the regularization 
parameter $\eta$.

\vspace{-2mm}
\subsection{Ground truth generation and prior evaluation}
\label{S:prior-evaluation}

The ground truth graphs we generate are motivated by the Watts-Strogatz 
``small-world'' model~\cite{watts98collective}.
To generate a ground truth population Laplacian, 
$\mathcal{L}$---equivalently, a population graph---we start with a 
two-dimensional lattice of width $w$ and height $h$, and thus $n = w h$ nodes.  
Points in the lattice are connected to their four nearest neighbors, making 
adjustments as necessary at the boundary.  
We then perform $s$ edge-swaps: for each swap, we choose two edges 
uniformly at random and then we swap the endpoints.  
For example, if we sample edges $i_1 \sim j_1$ and $i_2 \sim j_2$, then we
replace these edges with $i_1 \sim j_2$ and $i_2 \sim j_1$.  
Thus, when $s = 0$, the graph is the original discretization of a 
low-dimensional space; and as $s$ increases to infinity, the graph becomes 
more and more like a uniformly chosen $4$-regular graph (which is an 
expander~\cite{HLW06_expanders} and which bears similarities with an 
Erd\H{o}s-R\'{e}nyi random graph~\cite{Bollobas85}).
Indeed, each edge swap is a step of the Metropolis algorithm toward a 
uniformly chosen random graph with a fixed degree sequence.  
For the empirical evaluation presented here, $h = 7$ and $w = 6$; but the 
results are qualitatively similar for other values.

\begin{figure}[t]
    \centering
    \subfigure[Dirichlet distribution order statistics.]{
      \makebox{\label{F:dirichlet}\includegraphics[scale=0.6]{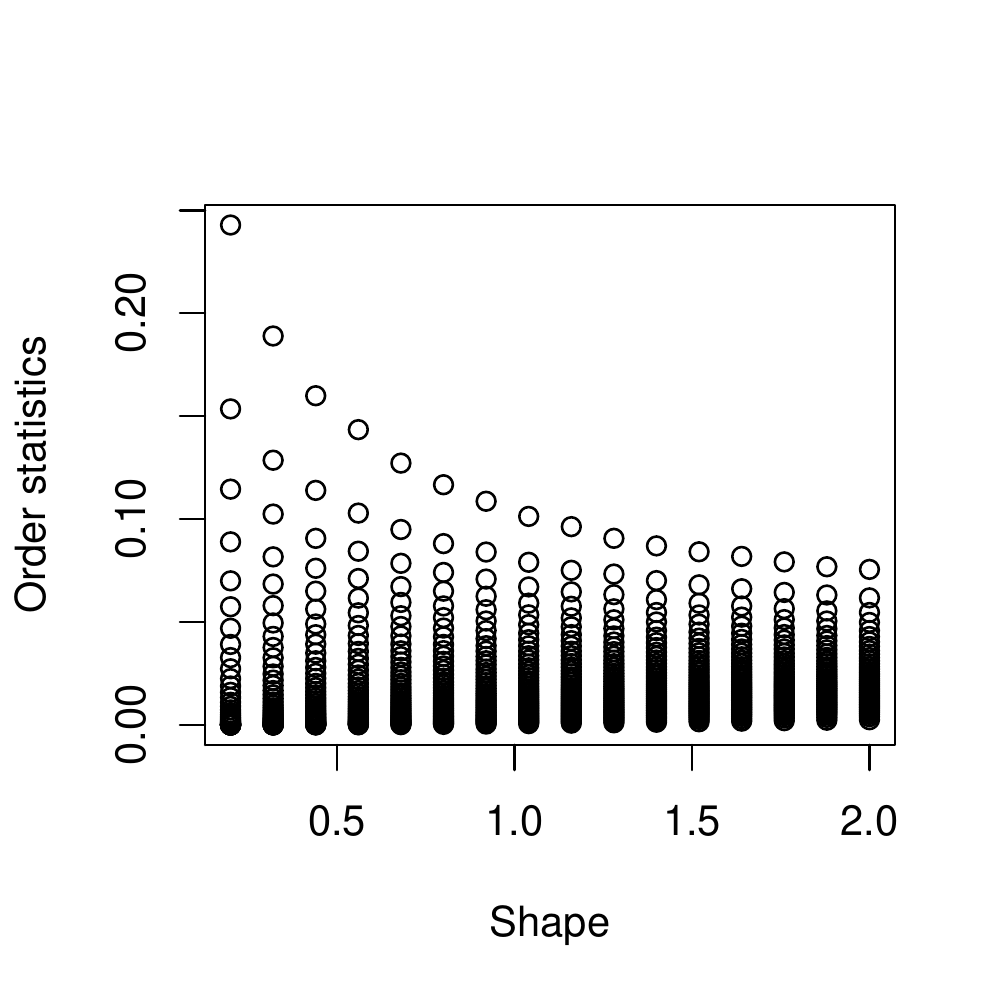}}
    }
    \subfigure[Spectrum of the inverse Laplacian.]{
      \makebox{\label{F:interpolate}\includegraphics[scale=0.6]{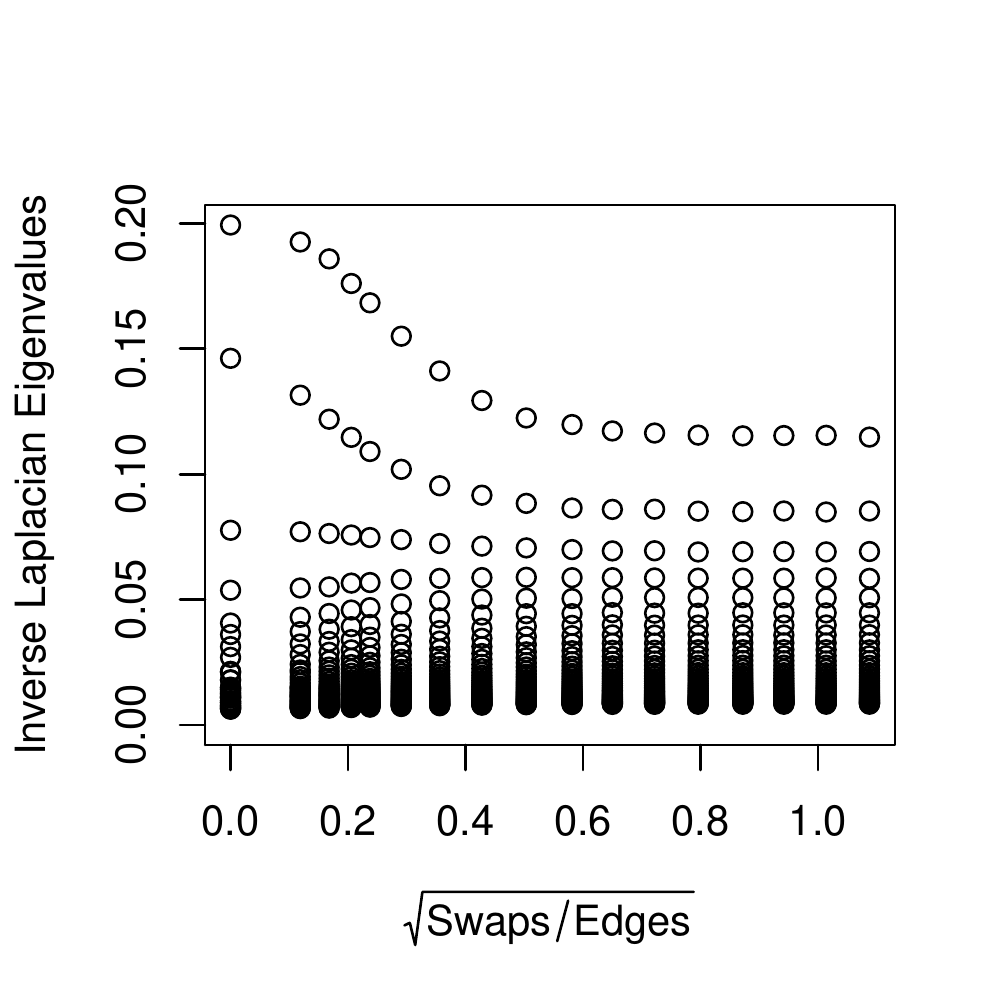}}
    }
\vspace{-0.5em}
\caption{Analytical and empirical priors.
\ref{F:dirichlet} shows the Dirichlet distribution order statistics versus the shape parameter; and
\ref{F:interpolate} shows the spectrum of $\Theta$ as a function of the 
rewiring parameter.  }
\vspace{-1em}
\label{fig:priors}
\end{figure}

Figure~\ref{fig:priors} compares the expected order statistics (sorted 
values) for the Dirichlet prior of Eqn.~\eqref{E:dirichlet-prior} with the 
expected eigenvalues of $\Theta  = \mathcal{L}^+ / \Tr(\mathcal{L}^+) $ for 
the small-world model.
In particular, in Figure~\ref{F:dirichlet}, we show the behavior of the 
order statistics of a Dirichlet distribution on the $(n-1)$-dimensional 
simplex with scalar shape parameter $\alpha$, as a function of $\alpha$.  
For each value of the shape $\alpha$, we generated a random 
$(n-1)$-dimensional Dirichlet vector, $\lambda$, with parameter vector 
$(\alpha, \dotsc, \alpha)$; we computed the $n-1$ order statistics of 
$\lambda$ by sorting its components; and we repeated this procedure for 
$500$ replicates and averaged the values.  
Figure~\ref{F:interpolate} shows a corresponding plot for the ordered 
eigenvalues of $\Theta$.
For each value of $s$ (normalized, here, by the number of edges $\mu$, 
where $\mu = 2 w h - w - h=71$), we generated the normalized Laplacian, 
$\mathcal{L}$, corresponding to the random $s$-edge-swapped grid; we 
computed the $n-1$ nonzero eigenvalues of $\Theta$; and we performed $1000$ 
replicates of this procedure and averaged the resulting eigenvalues. 

Interestingly, the behavior of the spectrum of the small-world model as
the edge-swaps increase is qualitatively quite similar to the behavior of 
the Dirichlet prior order statistics as the shape parameter $\alpha$ 
increases.
In particular, note that for small values of the shape parameter $\alpha$ 
the first few order-statistics are well-separated from the rest; and that 
as $\alpha$ increases, the order statistics become concentrated around 
$1/(n-1)$.
Similarly, when the edge-swap parameter $s = 0$, the top two eigenvalues 
(corresponding to the width-wise and height-wise coordinates on the grid) 
are well-separated from the bulk; as $s$ increases, the top eigenvalues 
quickly merge into the bulk; and eventually, as $s$ goes to infinity, the 
distribution becomes very close that that of a uniformly chosen $4$-regular 
graph.

\vspace{-2mm}
\subsection{Sampling procedure, estimation performance, and optimal regularization behavior}
\label{S:estimation}

\begin{figure}[h]
  \centering
  \subfigure[$m/\mu = 0.2$ and $s = 4$.]{
    \makebox{\label{fig:perf-frob:a}
             \includegraphics[scale=0.34]{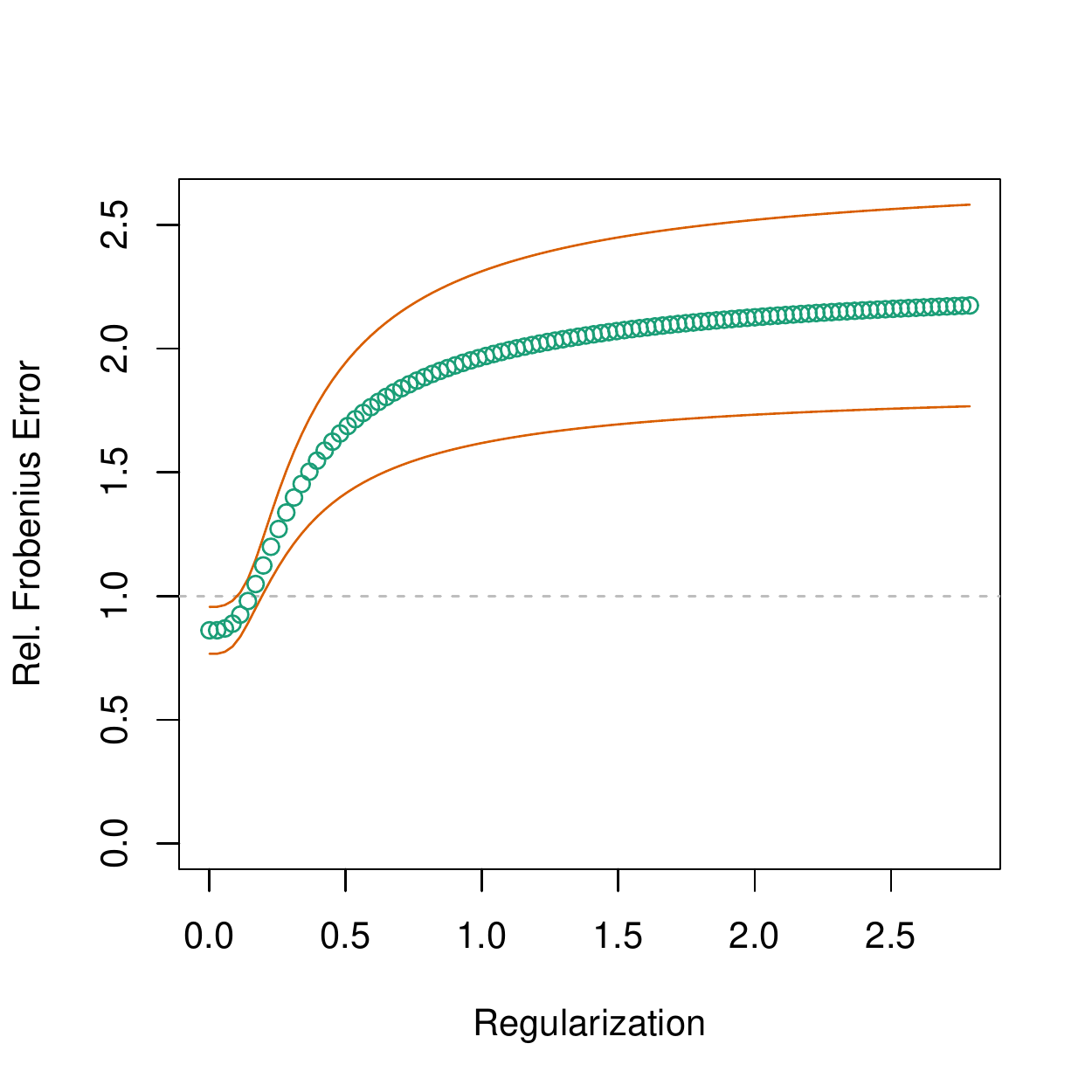}}
  }
  \subfigure[$m/\mu = 1.0$ and $s = 4$.]{
    \makebox{\label{fig:perf-frob:b}
             \includegraphics[scale=0.34]{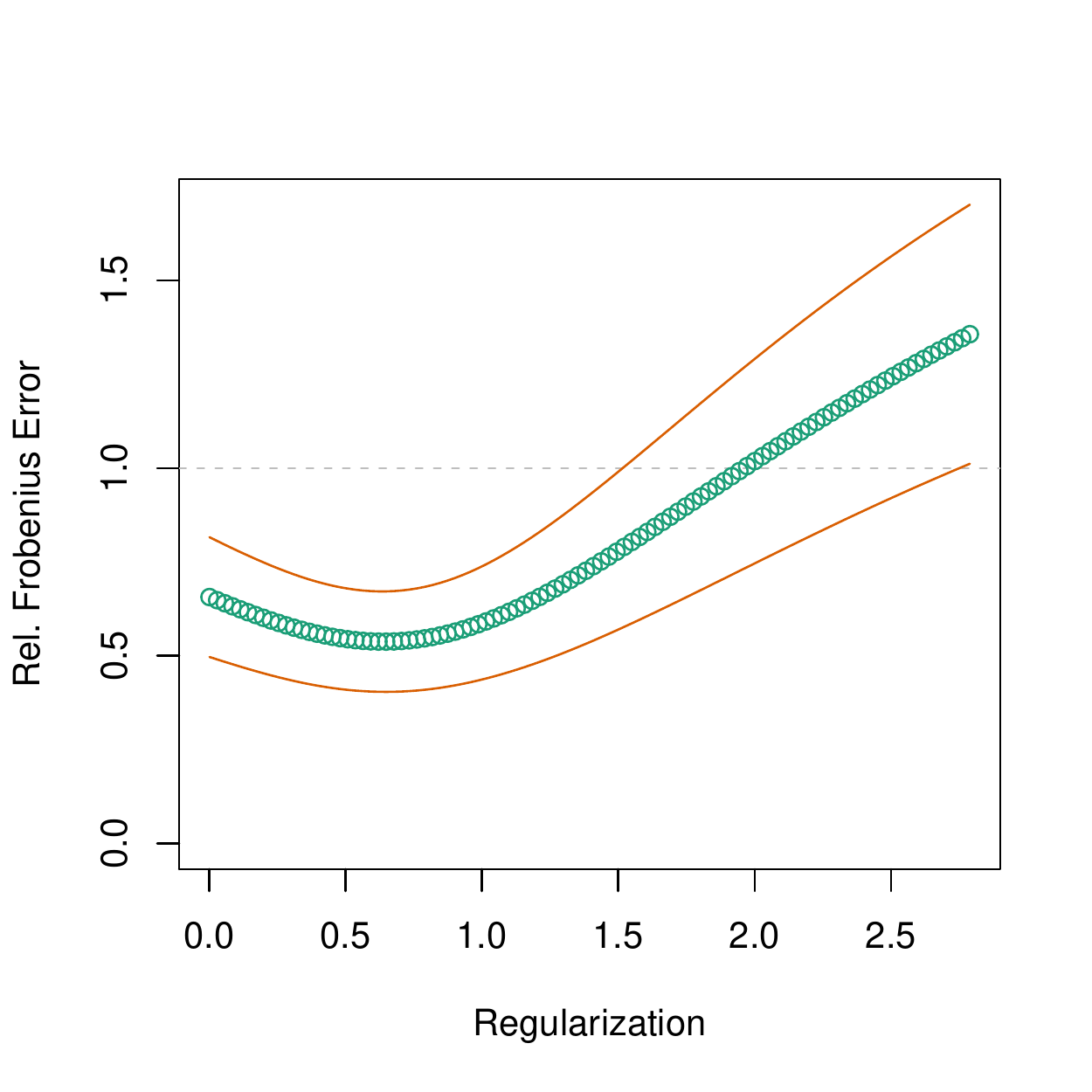}}
  }
  \subfigure[$m/\mu = 2.0$ and $s = 4$.]{
    \makebox{\label{fig:perf-frob:c}
             \includegraphics[scale=0.34]{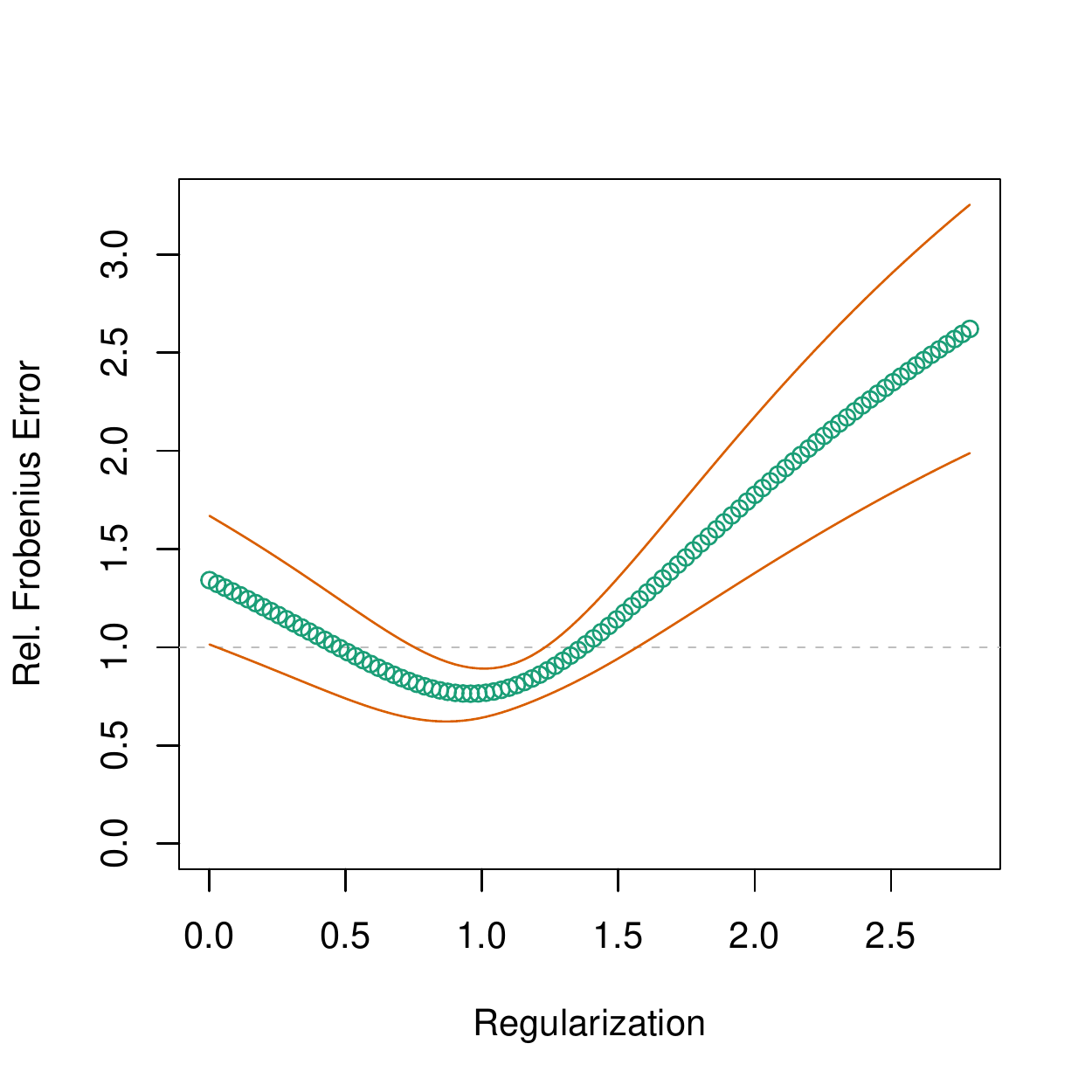}}
  } \\
\vspace{-1em}
  \subfigure[$m/\mu = 2.0$ and $s = 0$.]{
    \makebox{\label{fig:perf-frob:d}
             \includegraphics[scale=0.34]{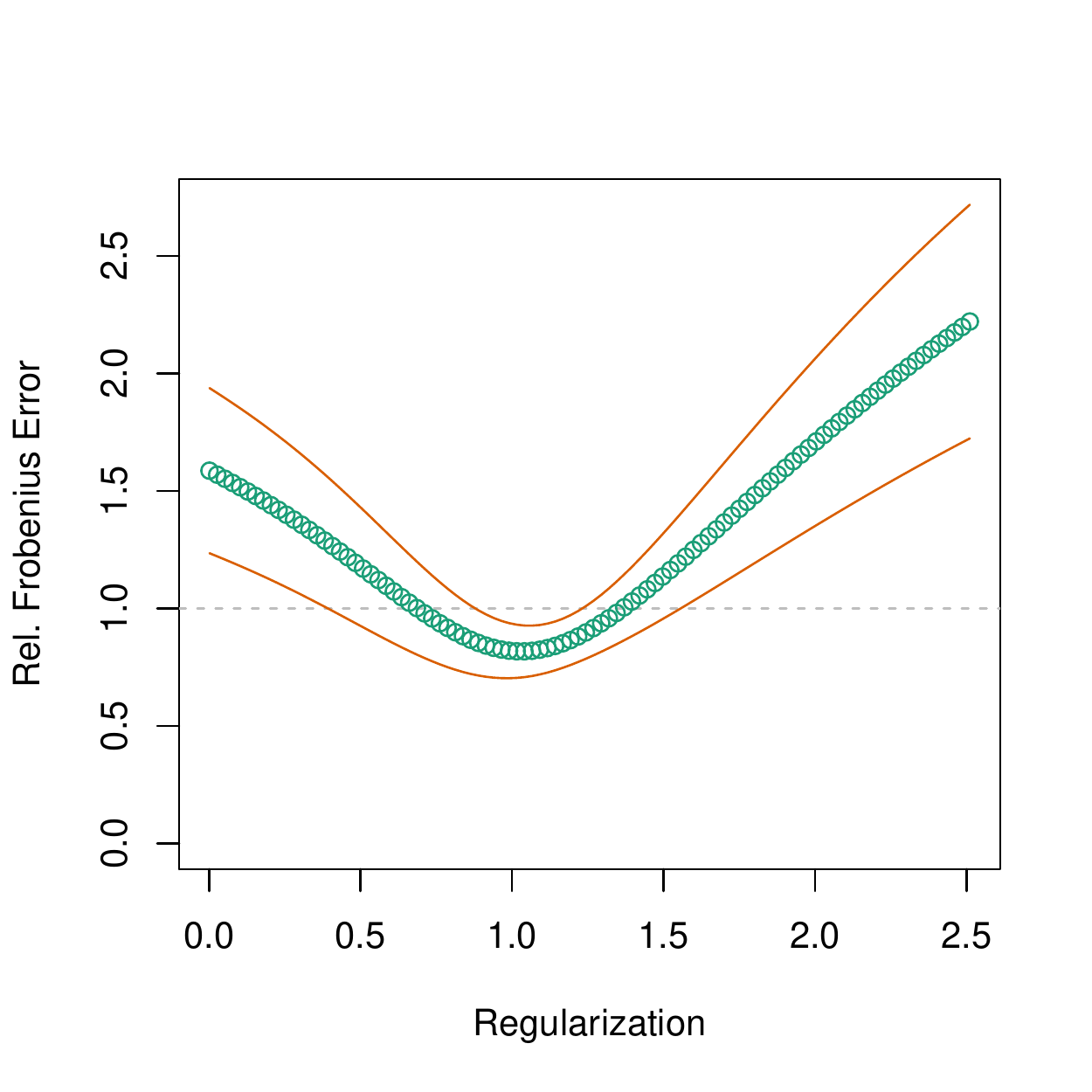}}
  }
  \subfigure[$m/\mu = 2.0$ and $s = 32$.]{
    \makebox{\label{fig:perf-frob:e}
             \includegraphics[scale=0.34]{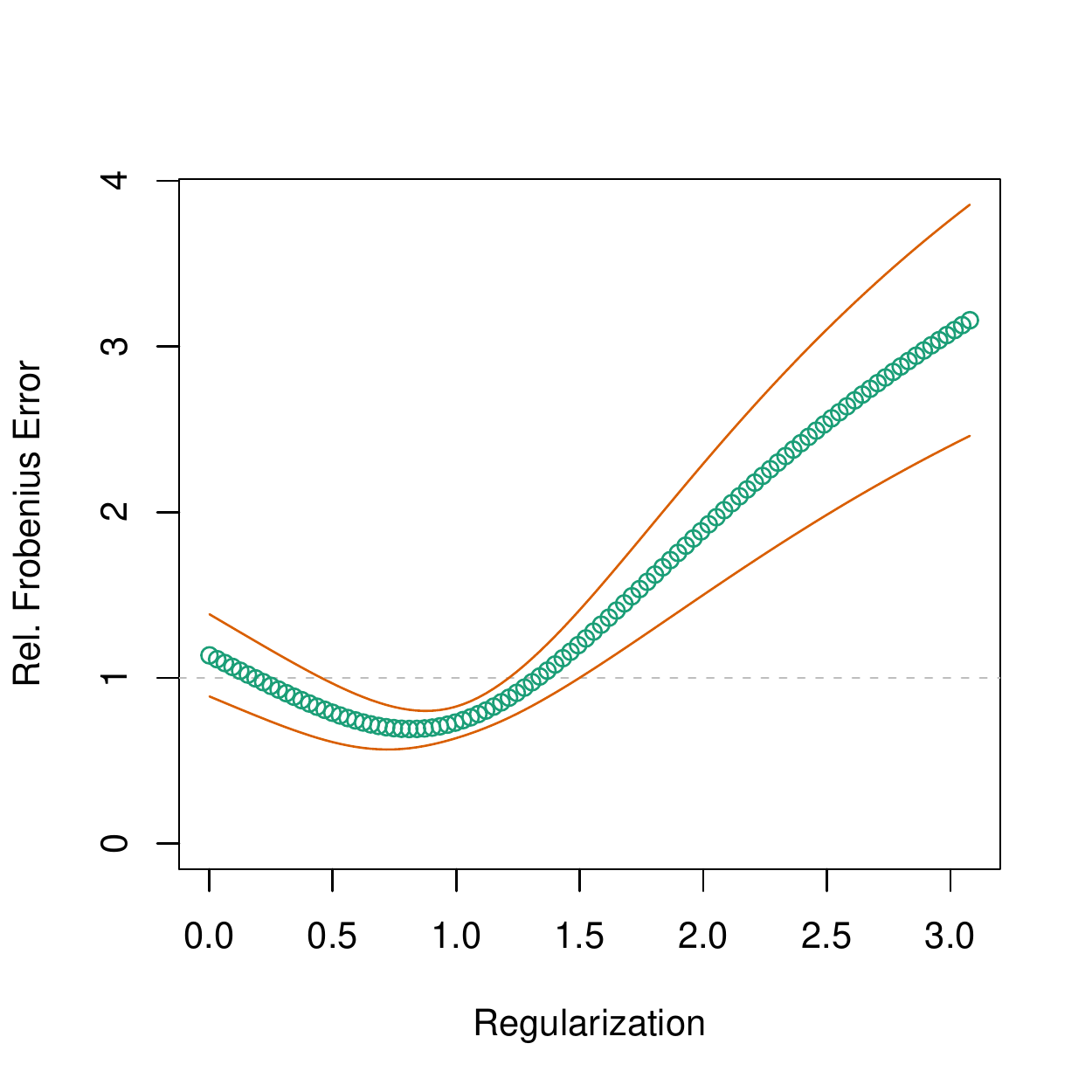}}
  }
  \subfigure[Optimal $\eta^\ast / \bar \tau$.]{
    \makebox{\label{F:optimal-frob}
             \includegraphics[scale=0.34]{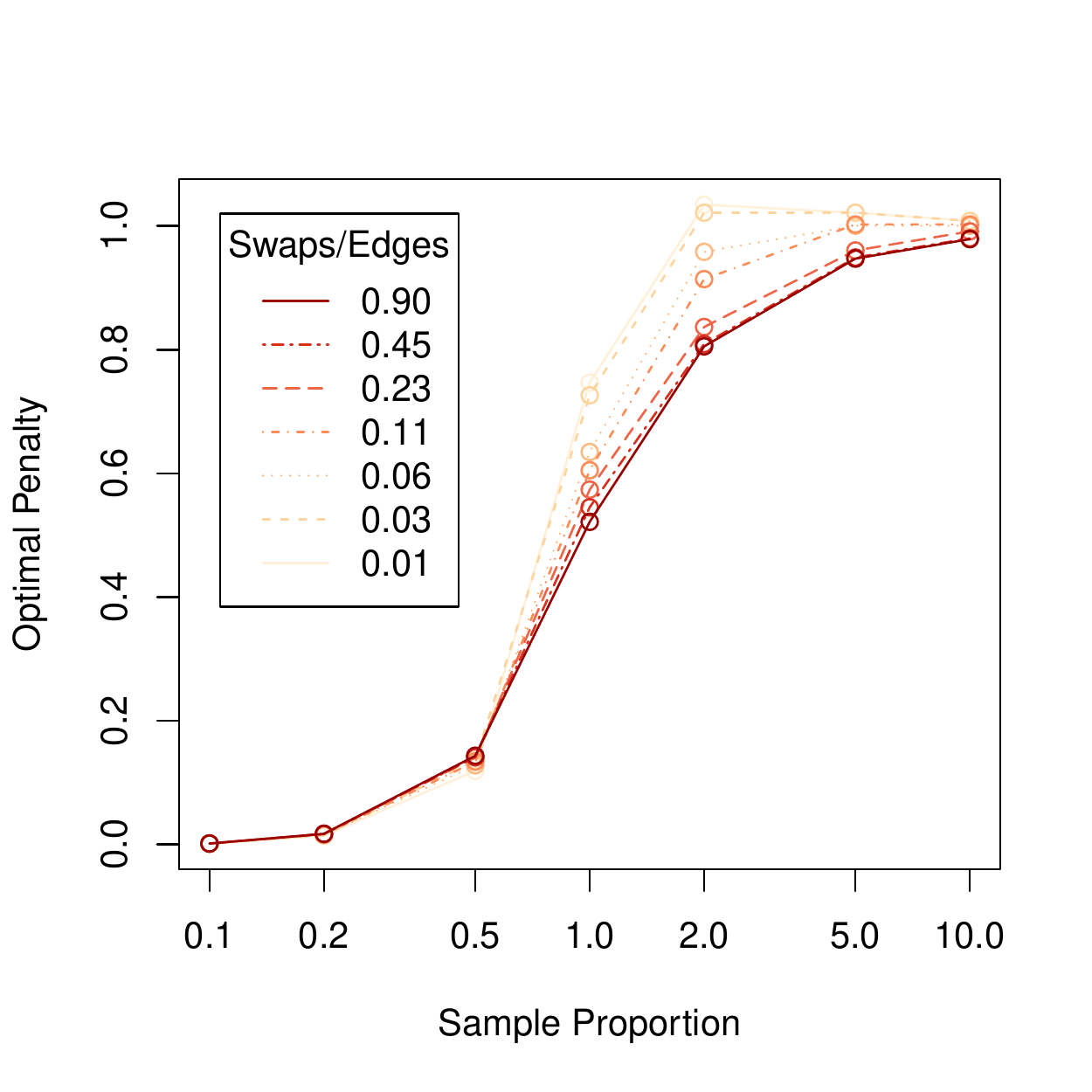}}
  }
\vspace{-1em}
\caption{Regularization performance.
\ref{fig:perf-frob:a} through \ref{fig:perf-frob:e} plot the relative 
Frobenius norm error, versus the (normalized) regularization parameter 
$\eta / \bar \tau$.
Shown are plots for various values of the (normalized) number of edges, 
$m/\mu$, and the edge-swap parameter, $s$.
Recall that the regularization parameter in the regularized 
SDP~(\ref{eqn:mo-reg-sdp}) is $1/\eta$, and thus smaller values along the 
X-axis correspond to stronger regularization.
\ref{F:optimal-frob} plots the optimal regularization parameter 
$\eta^\ast / \bar \tau$ as a function of sample proportion for different
fractions of edge swaps.  }
\label{fig:perf-frob}
\end{figure}

Finally, we evaluate the estimation performance of a regularized estimator 
of the graph Laplacian and compare it with an unregularized estimate.  
To do so, we construct the population graph $\mathcal{G}$ and its Laplacian 
$\mathcal{L}$, for a given value of $s$, as described in 
Section~\ref{S:prior-evaluation}.  
Let $\mu$ be the number of edges in $\mathcal{G}$. 
The sampling procedure used to generate the observed graph $G$ and its 
Laplacian $L$ is parameterized by the sample size $m$.
(Note that this parameter is analogous to the Wishart scale parameter in
Eqn.~\eqref{E:density}, but here we are sampling from a different 
distribution.)
We randomly choose $m$ edges with replacement from $\mathcal{G}$; and we 
define sample graph $G$ and corresponding Laplacian $L$ by setting the weight 
of $i \sim j$ equal to the number of times we sampled that edge.  
Note that the sample graph $G$ over-counts some edges in $\mathcal{G}$ and 
misses others.

We then compute the regularized estimate $\mathcal{\hat L}_\eta$, up to a 
constant of proportionality, by solving (implicitly!) the Mahoney-Orecchia
regularized SDP~\eqref{eqn:mo-reg-sdp} with $G(X) = -\log|X|$.  
We define the unregularized estimate $\hat L$ to be equal to the observed 
Laplacian, $L$.  
Given a population Laplacian $\mathcal{L}$, we define
$\tau = \tau(\mathcal{L}) = \Tr(\mathcal{L}^+)$ and
$\Theta = \Theta(\mathcal{L}) = \tau^{-1} \mathcal{L}^+$. 
We define $\hat \tau_\eta$, $\hat \tau$, $\hat \Theta_\eta$, and 
$\hat \Theta$ similarly to the population quantities.  
Our performance criterion is the relative Frobenius error
$\|\Theta - \hat \Theta_\eta\|_\mathrm{F} / \|\Theta - \hat \Theta
\|_\mathrm{F}$, where $\|\cdot\|_\mathrm{F}$ denotes the Frobenius norm
($\|A\|_\mathrm{F} = [\Tr(A' A)]^{1/2}$).  
Appendix~\ref{S:spectral-performance} presents similar results when the
performance criterion is the relative spectral norm error.

Figures~\ref{fig:perf-frob:a}, \ref{fig:perf-frob:b}, 
and~\ref{fig:perf-frob:c} show the regularization performance when $s=4$
(an intermediate value) for three different values of $m/\mu$.
In each case, the mean error and one standard deviation around it are 
plotted as a function of $\eta/\bar \tau$, as computed from 100 replicates;  
here, $\bar \tau$ is the mean value of $\tau$ over all replicates.
The implicit regularization clearly improves the performance of the 
estimator for a large range of $\eta$ values.  
(Note that the regularization parameter in the regularized 
SDP~(\ref{eqn:mo-reg-sdp}) is $1/\eta$, and thus smaller values along the
X-axis correspond to stronger regularization.)
In particular, when the data are very noisy, \emph{e.g.}, when $m/\mu=0.2$, 
as in Figure~\ref{fig:perf-frob:a}, improved results are seen only for very 
strong regularization;
for intermediate levels of noise, \emph{e.g.},  $m/\mu=1.0$, as in 
Figure~\ref{fig:perf-frob:b}, (in which case $m$ is chosen such that $G$ 
and $\mathcal{G}$ have the same number of edges counting multiplicity), 
improved performance is seen for a wide range of values of $\eta$; and for 
low levels of noise, Figure~\ref{fig:perf-frob:c} illustrates that improved 
results are obtained for moderate levels of implicit regularization.
Figures~\ref{fig:perf-frob:d} and~\ref{fig:perf-frob:e} illustrate similar
results for $s=0$ 
and 
$s=32$. 

As when regularization is implemented explicitly, in all these cases, we 
observe a ``sweet spot'' where there is an optimal value for the implicit 
regularization parameter.
Figure~\ref{F:optimal-frob} illustrates how the optimal choice of $\eta$ 
depends on parameters defining the population Laplacians and sample 
Laplacians.
In particular, it illustrates how $\eta^\ast$, the optimal value of 
$\eta$ (normalized by $\bar \tau$), depends on the sampling proportion 
$m / \mu$ and the swaps per edges $s / \mu$.
Observe that as the sample size $m$ increases, $\eta^\ast$ converges
monotonically to $\bar \tau$; and, further, that higher values of $s$ 
(corresponding to more expander-like graphs) correspond to higher values 
of $\eta^\ast$.
Both of these observations are in direct agreement with Eqn.~\eqref{E:eta}.

\vspace{-2mm}
\section{Conclusion}
\label{sxn:conc}
\vspace{-1mm}

We have provided a statistical interpretation for the observation that 
popular diffusion-based procedures to compute a quick approximation to the 
first nontrivial eigenvector of a data graph Laplacian exactly solve a 
certain regularized version of the problem.
One might be tempted to view our results as ``unfortunate,'' in that it is 
not straightforward to interpret the priors presented in this paper.
Instead, our results should be viewed as making explicit the implicit prior 
assumptions associated with making certain decisions (that are \emph{already} 
made in practice) to speed up computations.

Several extensions suggest themselves.
The most obvious might be to try to obtain Proposition~\ref{P:map-sdp} with 
a more natural or empirically-plausible model than the Wishart distribution; 
to extend the empirical evaluation to much larger and more realistic data 
sets; to apply our methodology to other widely-used approximation 
procedures; and to characterize when implicitly regularizing an eigenvector 
leads to better statistical behavior in downstream applications where that 
eigenvector is used.
More generally, though, we expect that understanding the 
algorithmic-statistical tradeoffs that we have illustrated will become 
increasingly important in very large-scale data analysis applications.

\newpage


\newpage
\appendix

\section{Relationship with local and global spectral graph partitioning}
\label{sxn:local-partitioning}
\vspace{-2mm}

In this section, we briefly describe the connections between our results 
and global and local versions of spectral partitioning, which were a 
starting point for this work.

The idea of spectral clustering is to approximate the best partition of the 
vertices of a connected graph into two pieces by using the nontrivial 
eigenvectors of a Laplacian (either the combinatorial or the normalized
Laplacian).  
This idea has had a long history~\cite{Donath:1973,fiedler75B,spielman96_spectral,guatterymiller98,ShiMalik00_NCut}.
The simplest version of spectral clustering involves computing the first 
nontrivial eigenvector (or another vector with Rayleigh quotient close to 
that of the first nontrivial eigenvector) of $L$, and ``sweeping'' over 
that vector. 
For example, 
one can take that vector, call it $x$, and, for a given threshold $K$, 
define the two sets of the partition as
\vspace{-2mm}
\begin{align*}
  C(x,K)      &= \{ i : x_i \ge K \}, \quad \text{and} \\
  \bar C(x,K) &= \{ i : x_i < K \}.
\end{align*}
Typically, this vector $x$ is computed by calling a black-box solver, but it 
could also be approximated with an iteration-based method (such as the Power
Method or Lanczos Method) or a random walk-based method (such as running a 
diffusive procedure or PageRank-based procedure to the asymptotic state).
Far from being a ``heuristic,'' 
this procedure provides a \emph{global 
partition} that (via Cheeger's inequality and for an appropriate choice of 
$K$) satisfies provable quality-of-approximation guarantees with respect to 
the combinatorial optimization problem of finding the best ``conductance'' 
partition in the entire 
graph~\cite{Mihail,spielman96_spectral,guatterymiller98}.

Although spectral clustering reduces a graph to a single vector---the 
smallest nontrivial eigenvector of the graph's Laplacian---and then 
clusters the nodes using the information in that vector, 
it is possible to obtain much more information about the graph by looking 
at more than one eigenvector of the Laplacian.  
In particular, the elements of the pseudoinverse of the combinatorial 
Laplacian, $L_0^{+}$, give local (\emph{i.e.}, node-specific) information 
about random walks on the graph.
The reason is that the pseudoinverse $L_0^+$  of the Laplacian is closely 
related to random walks on the graph.  
See, \emph{e.g.}~\cite{chebotarev1998proximity} for details.  
For example, 
it is known that the
quantity $L_0^{+} (u,u) + L_0^{+} (v,v) - L_0^{+} (u,v) - L_0^{+} (v,u)$ is
proportional to the commute time, a symmetrized version of the length of 
time before a random walker started at node $u$ reaches node $v$, whenever 
$u$ and $v$ are in the same connected component~\cite{chandra1989electrical}.  
Similarly, the elements of the pseudoinverse of the \emph{normalized} 
Laplacian give degree-scaled measures of proximity between the nodes of a 
graph.
It is likely that $L^{+}(u,v)$ has a probabilistic interpretation in terms 
of random walks on the graph, along the lines of our methodology, but we 
are not aware of any such interpretation.
From this perspective, given $L^{+}$ and a cutoff value, $K$, we can 
define a \emph{local partition} around node $u$ via 
$P_K(u) = \{ v : L^{+}(u,v) > K \}$.  
(Note that if $v$ is in $P_K(u)$, then $u$ is in $P_K(v)$; in addition, if 
the graph is disconnected, then there exists a $K$ such that $u$ and $v$ 
are in the same connected component iff $v \in P_K(u)$.)  
We call clustering procedures based on this idea \emph{local spectral 
partitioning}.

Although the na\"{i}ve way of performing this local spectral partitioning, 
\emph{i.e.}, to compute $L^{+}$ explicitly, is prohibitive for anything but 
very small graphs, these ideas form the basis for very fast local spectral 
clustering methods that employ truncated diffusion-based procedures to 
compute localized vectors with which to partition.
For example, this idea can be implemented by performing a diffusion-based 
procedure with an input seed distribution vector localized on a node $u$ and 
then sweeping over the resulting vector.
This idea was originally introduced in~\cite{Spielman:2004} as a 
diffusion-based operational procedure that was local in a very strong sense 
and that led to Cheeger-like bounds analogous to those obtained with the 
usual global spectral partitioning; and this was extended and improved 
by~\cite{andersen06local,Chung07_heatkernelPNAS}.
In addition, an optimization perspective on this was provided 
by~\cite{MOV09_TRv2}.
Although~\cite{MOV09_TRv2} is local in a weaker sense, it does obtain local 
Cheeger-like guarantees from an explicit locally-biased optimization 
problem, and it provides an optimization ansatz that may be interpreted as 
a ``local eigenvector.''
See~\cite{Spielman:2004,andersen06local,Chung07_heatkernelPNAS,MOV09_TRv2} 
for details.
Understanding the relationship between the ``operational procedure versus
optimization ansatz'' perspectives was the origin 
of~\cite{MO11-implementing} and thus of this work.

\vspace{-2mm}
\section{Heuristic justification for the Wishart density}
\label{sxn:justification}
\vspace{-2mm}

In this section, we describe a sampling procedure for $L$ which, in a very 
crude sense, leads approximately to a conditional Wishart density for 
$p(L \mid \mathcal{L})$.

Let $G$ be a graph with vertex set $V = \{ 1, 2, \dotsc, n \}$, edge set 
$E = V \times V$ equipped with the equivalence relation $(u,v) = (v,u)$.  
Let $\omega$ be an edge weight function, and let $\mathcal{L}_0$ and 
$\mathcal{L}$ be the corresponding combinatorial and normalized Laplacians.  
Let $\Delta$ be a diagonal matrix with
$\Delta(u,u) = \sum_{v} \omega(u,v)$, so that
$\mathcal{L} = \Delta^{-1/2} \mathcal{L}_0 \Delta^{-1/2}$.  Suppose
the weights are scaled such that $\sum_{(u,v) \in E} \omega(u,v) = 1$,
and suppose further that $\Delta(u,u) > 0$.
We refer to $\omega(u,v)$ as the population weight of edge $(u,v)$.

A simple model for the sample graph is as follows: we sample $m$ edges
from $E$, randomly chosen according to the population weight function.
That is, we see edges $(u_1, v_1), (u_2, v_2), \dotsc, (u_m, v_m)$,
where the edges are all drawn independently and identically such that
the probability of seeing edge $(u,v)$ is determined by $\omega$:
\[
  \prob_\omega\{ (u_1, v_1) = (u,v) \} = \omega(u,v).
\]
Note that we will likely see duplicate edges and not every edge with a
positive weight will get sampled.
Then, we construct a weight function from the sampled edges, called the
sample weight function, $w$, defined such that
\vspace{-2mm}
\[
  w(u,v) = \frac{1}{m} \sum_{i=1}^{m} 1\{ (u_i, v_i) = (u,v) \} ,
\vspace{-2mm}
\]
where $1\{\cdot\}$ is an indicator vector.
In turn, we construct a sample combinatorial Laplacian, $L_0$, defined such 
that
\[
  L_0(u,v)
    =
    \begin{cases}
      \sum_{w} w(u,w) &\text{when $u = v$,} \\
      -w(u,v) &\text{otherwise.}
    \end{cases}
\]
Let $D$ be a diagonal matrix such that
$D(u,u) = \sum_{v} w(u,v)$, and define $L = D^{-1/2} L_0 D^{-1/2}$.
Letting $\E_\omega$ denote expectation with respect to the probability
law $\prob_\omega$, note that $\E_\omega[w(u,v)] = \omega(u,v)$,
that $\E_\omega L_0 = \mathcal{L}_0$, and that $\E_\omega D = \Delta$.
Moreover, the strong law of large numbers guarantees that as $m$ increases,
these three quantities converge almost surely to their expectations.
Further, Slutzky's theorem guarantees that $\sqrt{m} (L - \mathcal{L})$ and
$\sqrt{m} \Delta^{-1/2} (L_0 - \mathcal{L}_0) \Delta^{-1/2}$ converge in
distribution to the same limit.  
We use this large-sample behavior to
approximate the the distribution of $L$ by the distribution of
$\Delta^{-1/2} L_0 \Delta^{-1/2}$.  Put simply, we treat the degrees as known.

The distribution of $L_0$ is completely determined by the edge
sampling scheme laid out above.  However, the exact form for the
density involves an intractable combinatorial sum.  
Thus, we appeal to a
crude approximation for the conditional density.
The approximation works as follows:
\begin{enumerate}
\item For $i = 1, \dotsc, m$, define $x_i \in \reals^n$ such that
  \[
    x_i(u)
      =
      \begin{cases}
        +s_i &\text{when $u = u_i$,} \\
        -s_i &\text{when $u = v_i$,} \\
        0 &\text{otherwise,}
      \end{cases}
  \]
  where $s_i \in \{ -1, +1 \}$ is chosen arbitrarily.
  Note that $L_0 = \frac{1}{m} \sum_{i=1}^m x_i x_i'$.
\item Take $s_i$ to be random, equal to $+1$ or $-1$ with probability
  $\tfrac{1}{2}$.  Approximate the distribution of $x_i$ by the
  distribution of a multivariate normal random variable, $\tilde x_i$,
  such that $x_i$ and $\tilde x_i$ have the same first and second
  moments.
\item Approximate the distribution of $L_0$ by the distribution of $\tilde L_0$, where
  \(
  \tilde L_0 = \frac{1}{m} \sum_{i=1}^m \tilde x_i \tilde x_i'.
  \)
  \item Use the asymptotic expansion above to approximate the
    distribution of $L$ by the distribution of
    $\Delta^{-1/2} \tilde L_0 \Delta^{-1/2}$.
\end{enumerate}

\noindent
The next two lemmas derive the distribution of $\tilde x_i$ and
$\tilde L_0$ in terms of $\mathcal{L}$, allowing us to get an
approximation for $p(L \mid \mathcal{L})$.

\begin{lemma}
  With $x_i$ and $\tilde x_i$ defined as above,
  \[
    \E_\omega[ x_i ] = \E_\omega[ \tilde x_i ] = 0,
  \]
  and
  \[
    \E_\omega[ x_i x_i' ] = \E_\omega [ \tilde x_i \tilde x_i' ] = \mathcal{L}_0.
  \]
\end{lemma}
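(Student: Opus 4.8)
The plan is to compute the two moments of $x_i$ directly from its definition and then invoke the moment-matching condition to conclude the same for $\tilde x_i$ for free. Since the equalities for $\tilde x_i$ hold \emph{by construction} (step 2 of the approximation scheme chooses $\tilde x_i$ to have the same first and second moments as $x_i$), the entire content of the lemma is the computation for $x_i$ itself.

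First I would handle the mean. Conditioning on which edge $(u_i,v_i)$ is sampled and on the sign $s_i$, note that $\E[s_i]=0$ and $s_i$ is chosen independently of the edge. Hence $\E_\omega[x_i \mid (u_i,v_i)] = \E[s_i]\cdot(e_{u_i}-e_{v_i}) = 0$, where $e_u$ is the standard basis vector, and therefore $\E_\omega[x_i]=0$ unconditionally. (Even without randomizing $s_i$, one can observe that summing over the two orderings $(u,v)$ and $(v,u)$, which are identified, cancels; but the cleaner route is step 2's randomization of the sign.)

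Next I would compute the second moment. Since $s_i^2 = 1$ always, we have $x_i x_i' = (e_{u_i}-e_{v_i})(e_{u_i}-e_{v_i})'$ regardless of the sign, so
\[
  \E_\omega[x_i x_i'] = \sum_{(u,v)\in E} \omega(u,v)\,(e_u - e_v)(e_u - e_v)'.
\]
Expanding $(e_u-e_v)(e_u-e_v)' = e_u e_u' + e_v e_v' - e_u e_v' - e_v e_u'$ and summing against the weights, the diagonal entry at node $u$ becomes $\sum_v \omega(u,v) = \Delta(u,u)$, and the off-diagonal $(u,v)$ entry becomes $-\omega(u,v)$. This is exactly the definition of the population combinatorial Laplacian $\mathcal{L}_0$, so $\E_\omega[x_i x_i'] = \mathcal{L}_0$. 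Finally, I would note that by the moment-matching in step 2, $\E_\omega[\tilde x_i] = \E_\omega[x_i] = 0$ and $\E_\omega[\tilde x_i \tilde x_i'] = \E_\omega[x_i x_i'] = \mathcal{L}_0$, completing the proof.

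There is essentially no obstacle here: the only mild subtlety is bookkeeping with the edge set's equivalence relation $(u,v)=(v,u)$, to make sure each undirected edge is counted once in the sum and that the factor-of-two issues in the off-diagonal entries are handled consistently with the convention used in the definition of $L_0$. One should state explicitly whether the sum over $E$ ranges over unordered pairs (the natural reading given the equivalence relation) so that the arithmetic matches $\mathcal{L}_0$ exactly.
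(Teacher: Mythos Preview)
Your proposal is correct and follows essentially the same approach as the paper: invoke the moment-matching definition of $\tilde x_i$, use the sign symmetry of $s_i$ to kill the first moment, and use $s_i^2=1$ to reduce the second moment to the standard sum-of-edge-outer-products representation of $\mathcal{L}_0$. The only cosmetic difference is that the paper computes $\E_\omega[x_i(u)\,x_i(v)]$ entrywise while you write the same calculation in outer-product form $\sum_{(u,v)}\omega(u,v)(e_u-e_v)(e_u-e_v)'$; these are the same argument.
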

\begin{proof}
The random variable $\tilde x_i$ is defined to have the same first
and second moments as $x_i$.
The first moment vanishes since $s_i \overset{d}{=} -s_i$ implies
that $x_i \overset{d}{=} -x_i$.  For the second moments, note that
when $u \neq v$, 
\[
  \E_\omega[x_i(u) \, x_i(v)]
  = -s_i^2 \, \prob_\omega\{ (u_i,v_i) = (u,v) \}  = -\omega(u,v)
  = \mathcal{L}_0(u,v).
\]
Likewise,
\[
  \E_\omega[\{x_i(u)\}^2]
      = \sum_{v} \prob_\omega\{ (u_i,v_i) = (u,v) \}
      = \sum_{v} \omega(u,v)
      = \mathcal{L}_0(u,u). \qedhere
\]
\end{proof}

\begin{lemma}\label{L:approx-wishart}
  The random matrix $\tilde L_0$ is distributed as $\tfrac{1}{m}
  \mathrm{Wishart }(\mathcal{L}_0, m)$ random variable.
  This distribution is supported on the set of
  positive-semidefinite matrices with the same nullspace as $\mathcal{L}_0$.  When
  $m \geq \rank(\mathcal{L}_0)$, the distribution has a density on this space
  given by
  \begin{equation}\label{E:wishart-density}
   f( \tilde L_0 \mid \mathcal{L}_0, m)
      \propto
      \frac{|\tilde L_0|^{(m - \rank(\mathcal{L}) - 1)/2}
        \exp\{-\tfrac{m}{2} \Tr(\tilde L_0 \mathcal{L}_0^+) \}}
        {|\mathcal{L}_0|^{m/2}}
  \end{equation}
  where the constant of proportionality depends only on $m$ and $n$
  and where $|\cdot|$ denotes pseudodeterminant (product of nonzero
  eigenvalues).
\end{lemma}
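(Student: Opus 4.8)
The plan is to identify $\tilde L_0$ as a (scaled, possibly singular) Wishart matrix directly from its construction, and then to recover the density by restricting attention to the subspace on which the law is supported, where the classical nonsingular Wishart density applies. For the distributional claim I would simply invoke the preceding lemma: by construction $\tilde x_1, \dotsc, \tilde x_m$ are independent multivariate normal with mean $0$ and common covariance $\mathcal{L}_0$, i.e.\ $\tilde x_i \sim \Normal(0,\mathcal{L}_0)$ i.i.d. By the definition of the Wishart distribution, $m\tilde L_0 = \sum_{i=1}^m \tilde x_i \tilde x_i' \sim \mathrm{Wishart}(\mathcal{L}_0, m)$, so $\tilde L_0 \sim \tfrac1m\mathrm{Wishart}(\mathcal{L}_0, m)$; this is immediate.

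For the support, recall that $\Normal(0,\mathcal{L}_0)$ places all its mass on $S := \operatorname{range}(\mathcal{L}_0)$, so each $\tilde x_i$ lies a.s.\ in $S$ and hence $\operatorname{range}(\tilde L_0) \subseteq S$; in particular $\tilde L_0$ is positive semidefinite with nullspace containing that of $\mathcal{L}_0$. For the reverse inclusion, when $m \geq r := \rank(\mathcal{L}_0)$, the restriction of each $\tilde x_i$ to the $r$-dimensional space $S$ is a nondegenerate Gaussian, so any $r$ of the $\tilde x_i$ are almost surely linearly independent; hence $\operatorname{range}(\tilde L_0) = S$ a.s.\ and the two nullspaces coincide.

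For the density, I would fix an orthonormal basis of $S$, assembled as the columns of $Q \in \reals^{n\times r}$, and pass to coordinates. Put $A := Q'\mathcal{L}_0 Q$, a positive-definite $r\times r$ matrix; since $\tilde L_0$ is supported on $S$ we have $\tilde L_0 = Q(Q'\tilde L_0 Q)Q'$, so the law of $\tilde L_0$ is carried by that of the $r\times r$ matrix $W := m\,Q'\tilde L_0 Q = \sum_{i=1}^m (Q'\tilde x_i)(Q'\tilde x_i)'$, which (as $m \geq r$) is an ordinary nonsingular Wishart matrix with scale $A$ and $m$ degrees of freedom, having the classical density proportional to $|W|^{(m-r-1)/2}|A|^{-m/2}\exp\{-\tfrac12\Tr(A^{-1}W)\}$. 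I would then change variables back through the linear bijection $Q'\tilde L_0 Q = W/m$ on the space of symmetric matrices supported on $S$ (whose Jacobian is a constant depending only on $m$ and $r$) and use the identities $|A| = |\mathcal{L}_0|$, $QA^{-1}Q' = \mathcal{L}_0^+$, and $\Tr(A^{-1}W) = m\Tr(\mathcal{L}_0^+\tilde L_0)$. Absorbing every factor independent of $\tilde L_0$ --- the Wishart normalizer $2^{mr/2}\Gamma_r(m/2)$, the Jacobian, and the accumulated powers of $m$ --- into the constant of proportionality, and recalling $r = \rank(\mathcal{L}_0) = \rank(\mathcal{L})$, produces exactly Eqn.~\eqref{E:wishart-density}.

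The step I expect to be the main obstacle is the last one: one must be careful that the density exists only with respect to Lebesgue measure on the lower-dimensional space of symmetric matrices supported on $S$ --- this is precisely the meaning of ``a density on this space'' --- and one must keep track of which constants (the Jacobian of the $1/m$ scaling, the basis change, and the Wishart normalizing constant) are being discarded into the proportionality constant. The distributional identification and the support computation are, by contrast, routine.
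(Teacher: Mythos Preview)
Your proposal is correct and follows essentially the same route as the paper: identify $m\tilde L_0$ as Wishart by definition, then pass to an orthonormal basis of $\operatorname{range}(\mathcal{L}_0)$ (your $Q$, the paper's $U$) where the ordinary nonsingular Wishart density applies, and pull back. Your write-up is in fact more careful than the paper's own proof, which states the reduction to $r\times r$ coordinates but leaves the identities $|Q'\mathcal{L}_0 Q| = |\mathcal{L}_0|$ and $Q(Q'\mathcal{L}_0 Q)^{-1}Q' = \mathcal{L}_0^+$, the reverse support inclusion, and the Jacobian bookkeeping implicit.
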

\begin{proof}
  Since $m \tilde L$ is a sum of $m$ outer products of multivariate
  $\mathrm{Normal}(0, \mathcal{L}_0)$, it is Wishart distributed
  (by definition).
  Suppose $\rank(\mathcal{L}_0) = r$ and
  $U \in \reals^{n \times r}$ is a matrix whose columns are the
    eigenvectors of $\mathcal{L}_0$.  Note that
    $U' \tilde x_i \overset{d}{=} \mathrm{Normal}(0, U' \mathcal{L}_0 U)$,
    and that $U' \mathcal{L}_0 U$ has full rank.  Thus,
    \(
      U' \tilde L_0 U
    \)
    has a density over the space of $r \times r$ positive-semidefinite
    matrices whenever $m \geq r$.  The density of $U' \tilde L U$ is
    exactly equal to $f(\tilde L_0 \mid \mathcal{L}_0, m)$,
    defined above.
\end{proof}

Using the previous lemma, the random variable $\tilde L = \Delta^{-1/2} \tilde L_0
\Delta^{-1/2}$ has density
\[
  f(\tilde L \mid \mathcal{L}, m)
    \propto
     \frac{|\Delta^{1/2} \tilde L \Delta^{1/2}|^{(m - \rank(\mathcal{L}) - 1)/2}
        \exp\{-\tfrac{m}{2} \Tr(\Delta^{1/2} \tilde L \Delta^{1/2} \mathcal{L}_0^+) \}}
        {|\Delta^{1/2} \mathcal{L}_0 \Delta^{1/2}|^{m/2}},
\]
where we have used that $\rank(\mathcal{L}_0) = \rank(\mathcal{L})$, and
the constant of proportionality depends on $m$, $n$,
$\rank(\mathcal{L})$, and $\Delta$.  
Then, if we approximate 
$| \Delta^{1/2} \tilde L \Delta^{1/2}| \approx |\Delta| |\tilde L|$ and
$\Delta^{1/2} \mathcal{L}_0^+ \Delta^{1/2} \approx \mathcal{L}^+$,
then $f$ is ``approximately''  the density of a $\tfrac{1}{m}
\mathrm{Wishart }(\mathcal{L}, m)$ random variable.  These last
approximations are necessary because $\tilde L$ and $\mathcal{L}_0$
are rank-degenerate.

To conclude, we do not want to overstate the validity of this heuristic 
justification.
In particular, it makes three key approximations:
\begin{enumerate}
\item the true degree matrix $\Delta$ can be
  approximated by the observed degree matrix $D$;
\item the distribution of $x_i$, a sparse vector, is well approximated
  $\tilde x_i$, a Gaussian (dense) vector;
\item the quantities
  $| \Delta^{1/2} \tilde L \Delta^{1/2}|$
  and
  $\Delta^{1/2} \mathcal{L}_0^+ \Delta^{1/2}$
  can be replaced with
  $|\Delta| |\tilde L|$
  and $\mathcal{L}^+$.
\end{enumerate}
None of these approximations hold in general, though as argued above, the 
first is plausible if $m$ is large relative to $n$.  
Likewise, since $\tilde L$ and $\mathcal{L}$ are nearly full rank, the third
approximation is likely not too bad.  
The biggest leap of faith is the second approximation.  
Note, \emph{e.g.}, that despite their first moments being equal, the 
second moments of $\tilde x_i \tilde x_i'$ and $x_i x_i'$ differ.

\vspace{-2mm}
\section{Other priors and the relationship to Heat Kernel and Lazy Random Walk}
\label{sxn:other-priors}
\vspace{-2mm}

There is a straightforward generalization of Proposition~\ref{P:map-sdp} to 
other priors.
In this section, we state it, and we observe connections with the Heat 
Kernel and Lazy Random Walk procedures.

\begin{proposition}
\label{prop:map-generaliz}
  Suppose the conditional likelihood for $L$ given $\mathcal{L}$ is as
  defined in \eqref{E:density} and the prior density for $\mathcal{L}$
  is of the form
  \begin{equation}
    p(\mathcal{L})
      \propto
        p(\tau)
        |\Theta|^{-m/2}
        \exp\{ -q(\tau) \, G(\Theta) \},
  \label{eqn:prior-app}
  \end{equation}
  where
  $\tau = \Tr(\mathcal{L}^{+})$,
  $\Theta = \tau^{-1} \mathcal{L}^{+}$,
  and $p$ and $q$ are functions with $q(\tau) > 0$ over the support
  of the prior.
  Define
  $\mathcal{\hat L}$ to be the MAP estimate of $\mathcal{L}$.  Then,
  $[\Tr(\mathcal{\hat L}^+)]^{-1} \mathcal{\hat L}^+$ solves the
  Mahoney-Orecchia regularized SDP \eqref{eqn:mo-reg-sdp}, with $G$ the same 
  as in the expression (\ref{eqn:prior-app}) for $p(\mathcal{L})$ and with
  \[
    \eta = \frac{m \hat \tau}{2 \, q(\hat \tau)},
  \]
  where $\hat \tau = \Tr(\mathcal{\hat L}^+)$.
\end{proposition}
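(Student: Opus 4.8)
The plan is to run exactly the argument used to prove Proposition~\ref{P:map-sdp}, of which this proposition is the special case $G(\Theta) = -\log|\Theta|$ in \eqref{eqn:prior-app}. For any $\mathcal{L}$ in the support of the posterior I would introduce the spectral parametrization $\tau = \Tr(\mathcal{L}^+)$, $\Theta = \tau^{-1}\mathcal{L}^+$, so that $\Tr(\Theta) = 1$ and $\rank(\Theta) = n-1$, and rewrite the prior \eqref{eqn:prior-app} in the exponential form \eqref{E:prior} with
\[
  U(\mathcal{L}) = -\log p(\tau) + \tfrac{m}{2}\log|\Theta| + q(\tau)\,G(\Theta).
\]
I would then substitute this $U$, together with $\mathcal{L}^+ = \tau\Theta$, $\Tr(L\mathcal{L}^+) = \tau\,\Tr(L\Theta)$, and $|\mathcal{L}^+| = \tau^{n-1}|\Theta|$, into the log-posterior read off from \eqref{E:posterior}.

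The key observation is that the $|\Theta|^{-m/2}$ factor built into \eqref{eqn:prior-app} is chosen precisely so as to cancel the $\tfrac{m}{2}\log|\mathcal{L}^+|$ contribution that the Wishart normalization in \eqref{E:density} puts into the log-posterior (recall that on the support $|\mathcal{L}|^{-m/2} = |\mathcal{L}^+|^{m/2}$). After this cancellation the log-posterior collapses, up to an additive constant, to
\[
  -\tfrac{m\tau}{2}\Tr(L\Theta) - q(\tau)\,G(\Theta) + g(\tau), \qquad g(\tau) = \tfrac{m(n-1)}{2}\log\tau + \log p(\tau),
\]
with the same $g$ as in the proof of Proposition~\ref{P:map-sdp}. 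Then I would profile out $\tau$: if $\hat{\mathcal{L}}$ is the MAP estimate and $\hat\tau = \Tr(\hat{\mathcal{L}}^+)$, $\hat\Theta = \hat\tau^{-1}\hat{\mathcal{L}}^+$, then holding $\tau = \hat\tau$ fixed, $\hat\Theta$ must maximize $\Theta \mapsto -\tfrac{m\hat\tau}{2}\Tr(L\Theta) - q(\hat\tau)\,G(\Theta)$ over the feasible set; dividing by $q(\hat\tau) > 0$, it minimizes $\Tr(L\Theta) + \tfrac{2q(\hat\tau)}{m\hat\tau}\,G(\Theta)$ subject to $\Theta\succeq 0$, $\Tr(\Theta) = 1$, $\Theta D^{1/2}1 = 0$, which is the Mahoney-Orecchia SDP \eqref{eqn:mo-reg-sdp} with the same $G$ and $\tfrac{1}{\eta} = \tfrac{2q(\hat\tau)}{m\hat\tau}$, i.e. $\eta = \tfrac{m\hat\tau}{2q(\hat\tau)}$, as claimed.

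The point requiring care is matching the feasible regions: the posterior \eqref{E:posterior} is supported on $\mathcal{X}$ (or a subset $\bar{\mathcal{X}}$), whose members have rank exactly $n-1$, whereas \eqref{eqn:mo-reg-sdp} carries no explicit rank constraint. I would dispatch this as in the $G=-\log|X|$ case: for the penalties $G$ of interest the optimum of \eqref{eqn:mo-reg-sdp} automatically lands in the rank-$(n-1)$ set (for $G = -\log|X|$ it is forced, the objective being $+\infty$ on rank-deficient matrices; for the other $G$ in \cite{MO11-implementing} one invokes their characterization of the optimizers), so the minimizer over $\mathcal{X}$ coincides with the SDP minimizer. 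One also uses implicitly that the posterior is proper, so that $\hat\tau$ exists and is positive — this is where any extra hypothesis on $p,q$ would enter. Apart from this bookkeeping the proof is the computation above, and the only genuine step is spotting the cancellation that makes the log-posterior separate into a function of $\Theta$ of the SDP form plus a function of $\tau$ alone.
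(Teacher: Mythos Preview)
Your proposal is correct and matches the paper's approach exactly: the paper states that the proof ``is a straightforward generalization of the proof of Proposition~\ref{P:map-sdp} and is thus omitted,'' and that is precisely the computation you have written out. (One cosmetic slip: where you say ``dividing by $q(\hat\tau)>0$,'' the objective you display is actually obtained by dividing by $\tfrac{m\hat\tau}{2}$; the resulting SDP and value of $\eta$ are correct.)
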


The proof of this proposition is a straightforward generalization of the 
proof of Proposition~\ref{P:map-sdp} and is thus omitted.  
Note that we recover the result of Proposition~\ref{P:map-sdp} by setting
$G(\Theta) = - \log |\Theta|$ and $q(\tau) = \frac{m}{2} + \alpha - 1$.
In addition, by choosing $G(\cdot)$ to be the generalized entropy or the matrix 
$p$-norm penalty of~\cite{MO11-implementing}, we obtain variants of the 
Mahoney-Orecchia regularized SDP \eqref{eqn:mo-reg-sdp} with the 
regularization term $G(\cdot)$.
By then combining Proposition~\ref{prop:map-generaliz} with their result, we 
get that the MAP estimate of $\mathcal{L}$ is related to the Heat Kernel and 
Lazy Random Walk procedures, respectively, in a manner analogous to what we 
saw in Section~\ref{sxn:priors} with the PageRank procedure.
In both of these other cases, however, the prior $p(\mathcal{L})$ is 
data-dependent in the strong sense that it explicitly depends on the number 
of data points; and, in addition, the priors for these other cases do not 
correspond to any well-recognizable parametric distribution.

\vspace{-2mm}
\section{Regularization performance with respect to the relative spectral error}
\label{S:spectral-performance}
\vspace{-2mm}

In this section, we present Figure~\ref{fig:perf-spec}, which shows 
the regularization performance for our empirical evaluation, when the 
performance criterion is the relative spectral norm error, \emph{i.e.},
$\|\Theta - \hat \Theta_\eta\|_\mathrm{2} / \|\Theta - \hat \Theta
\|_\mathrm{2}$, where $\|\cdot\|_\mathrm{2}$ denotes spectral norm of a 
matrix (which is the largest singular value of that matrix).
See Section~\ref{S:estimation} for details of the setup.
Note that these results are very similar to those for the relative Frobenius 
norm error that are presented in Figure~\ref{fig:perf-frob}.

\begin{figure}[h]
  \centering
  \subfigure[$m/\mu = 0.2$ and $s = 0$.]{
    \makebox{\includegraphics[scale=0.34]{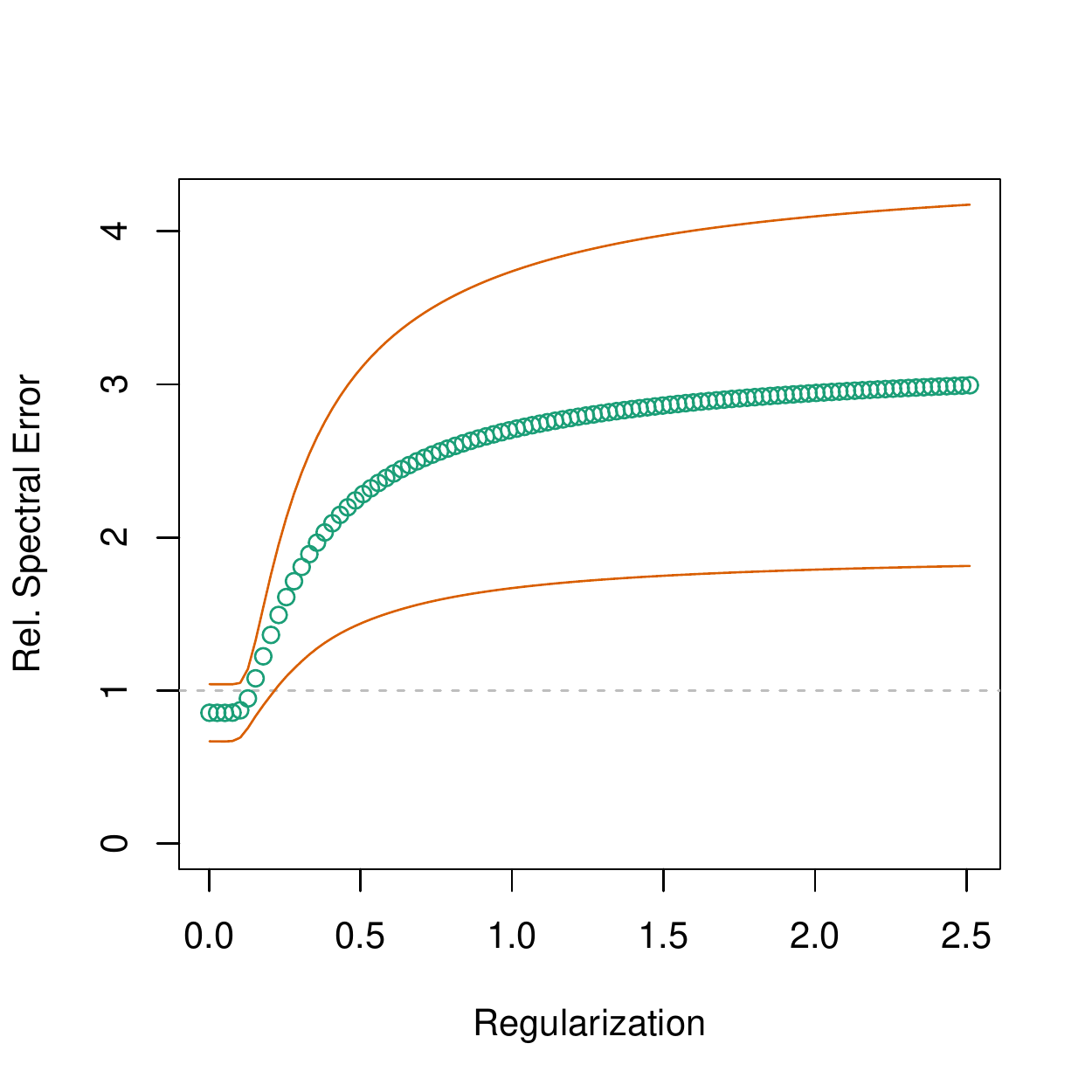}}
  }
  \subfigure[$m/\mu = 0.2$ and $s = 4$.]{
    \makebox{\includegraphics[scale=0.34]{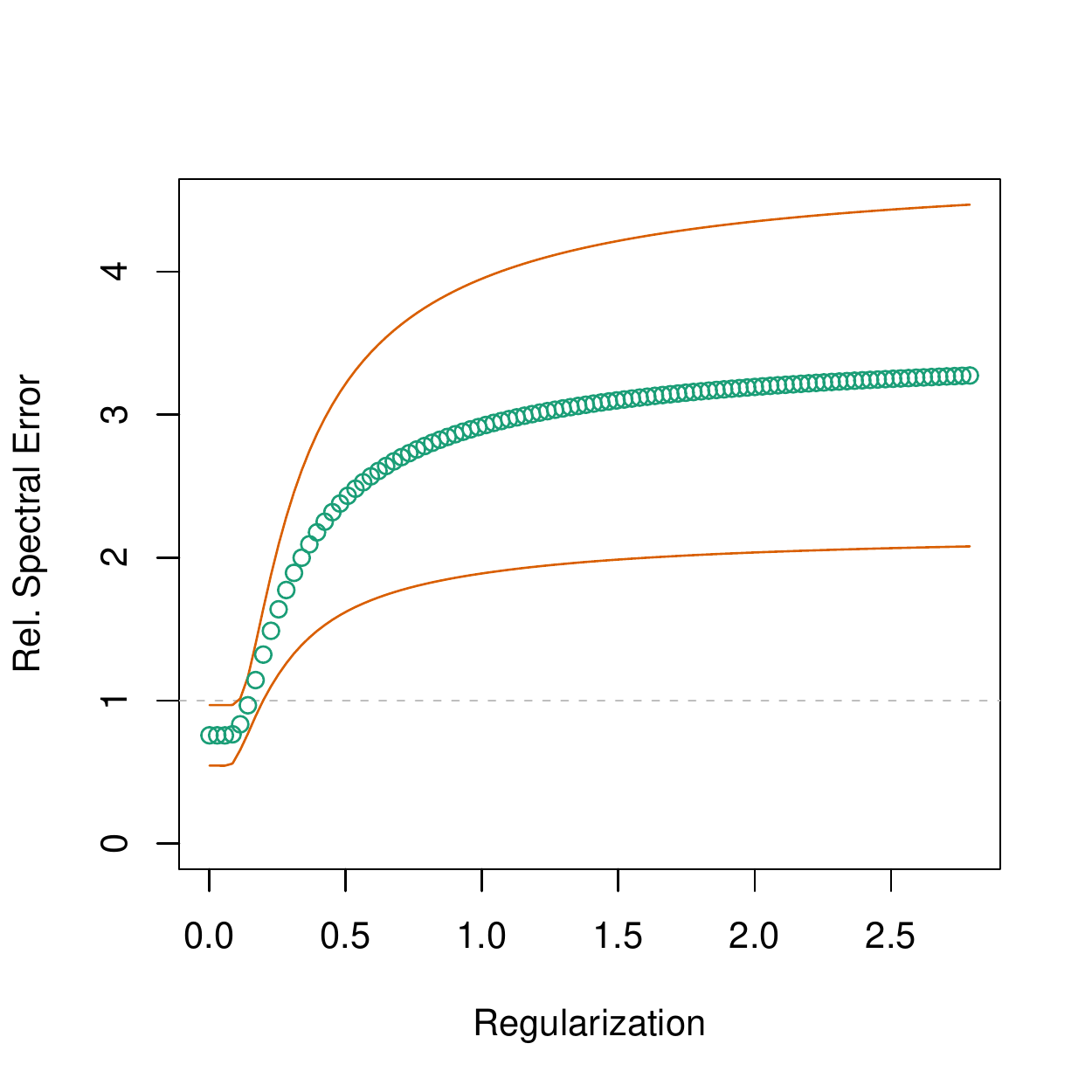}}
  }
  \subfigure[$m/\mu = 0.2$ and $s = 32$.]{
    \makebox{\includegraphics[scale=0.34]{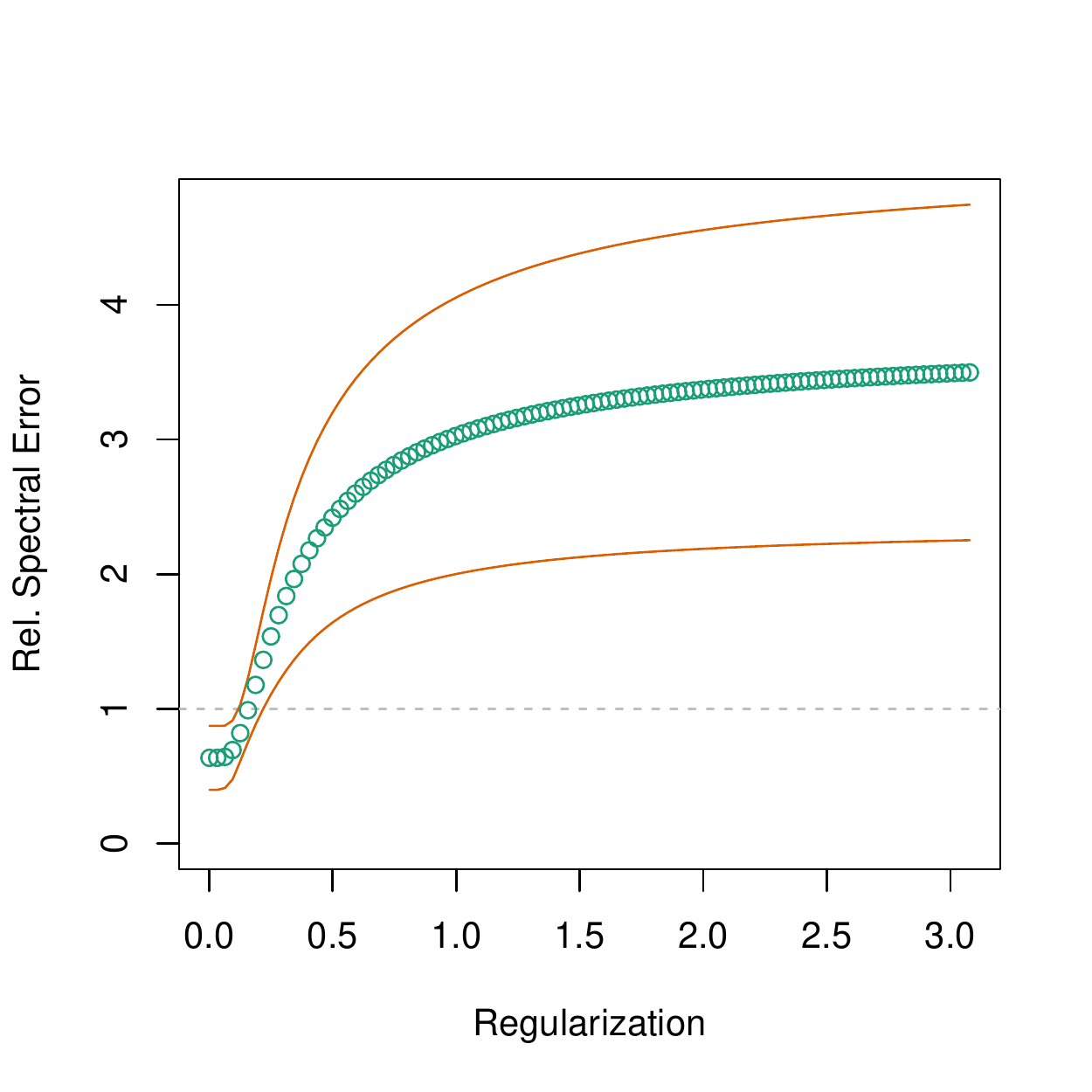}}
  } \\
  \subfigure[$m/\mu = 2.0$ and $s = 0$.]{
    \makebox{\includegraphics[scale=0.34]{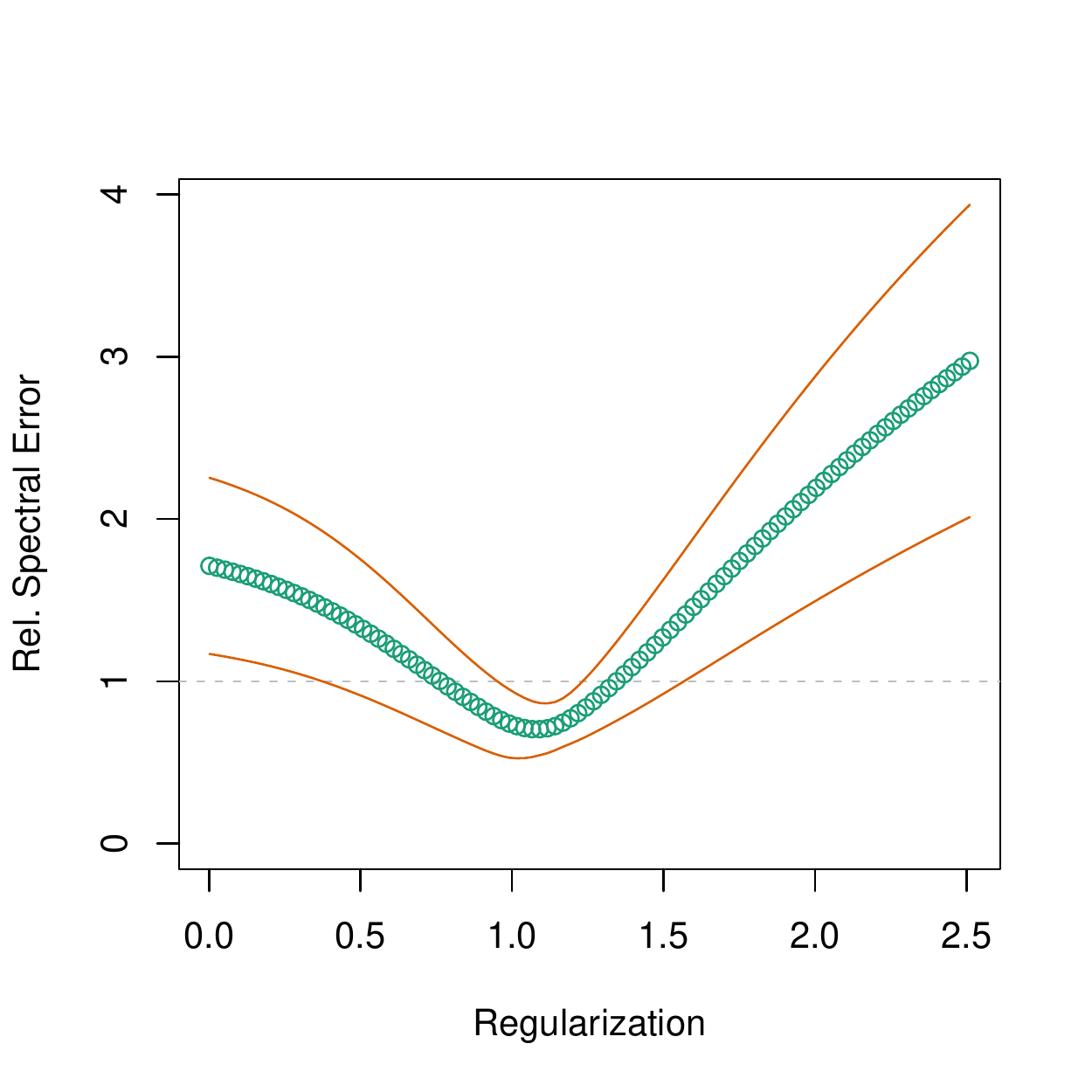}}
  }
  \subfigure[$m/\mu = 2.0$ and $s = 4$.]{
    \makebox{\includegraphics[scale=0.34]{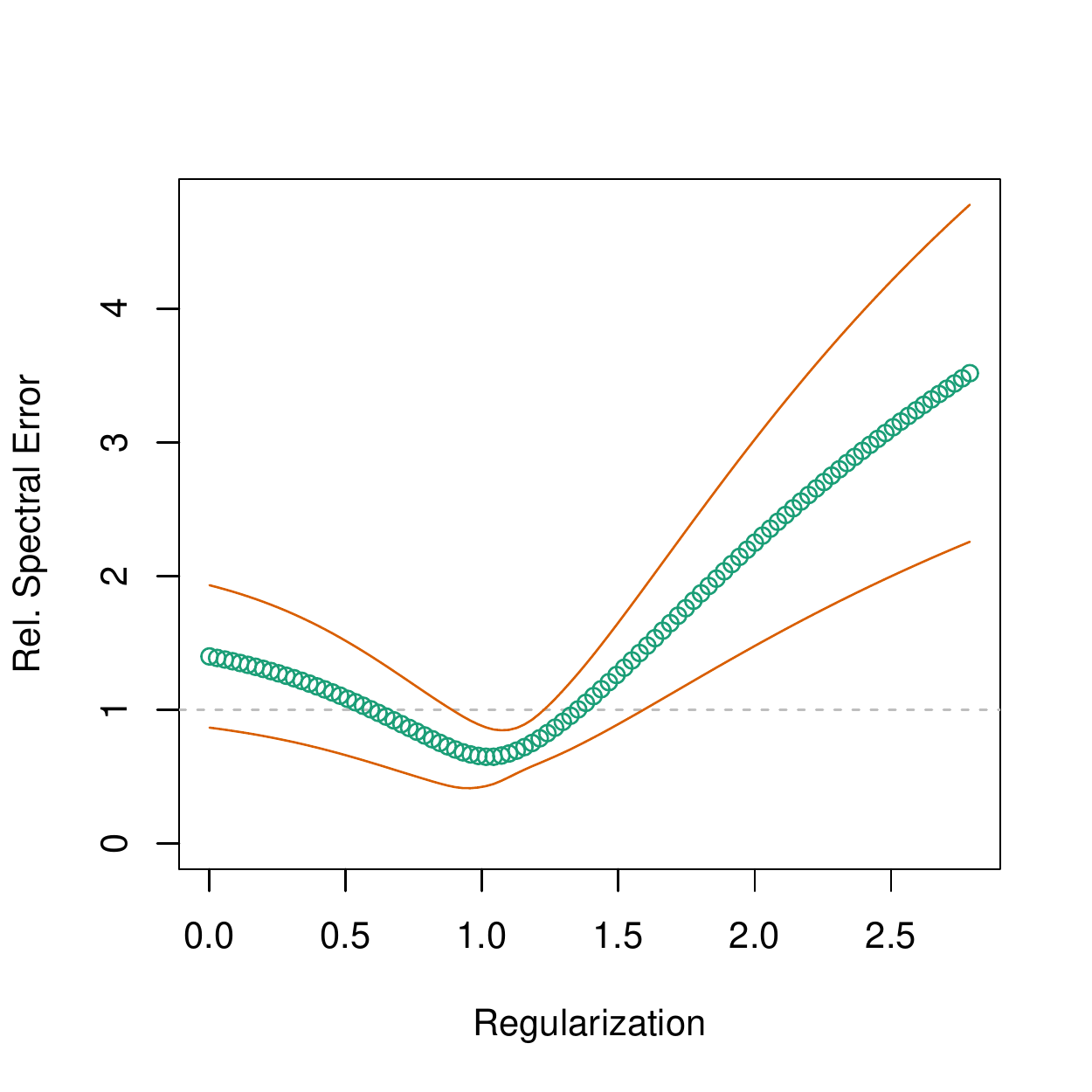}}
  }
  \subfigure[$m/\mu = 2.0$ and $s = 32$.]{
    \makebox{\includegraphics[scale=0.34]{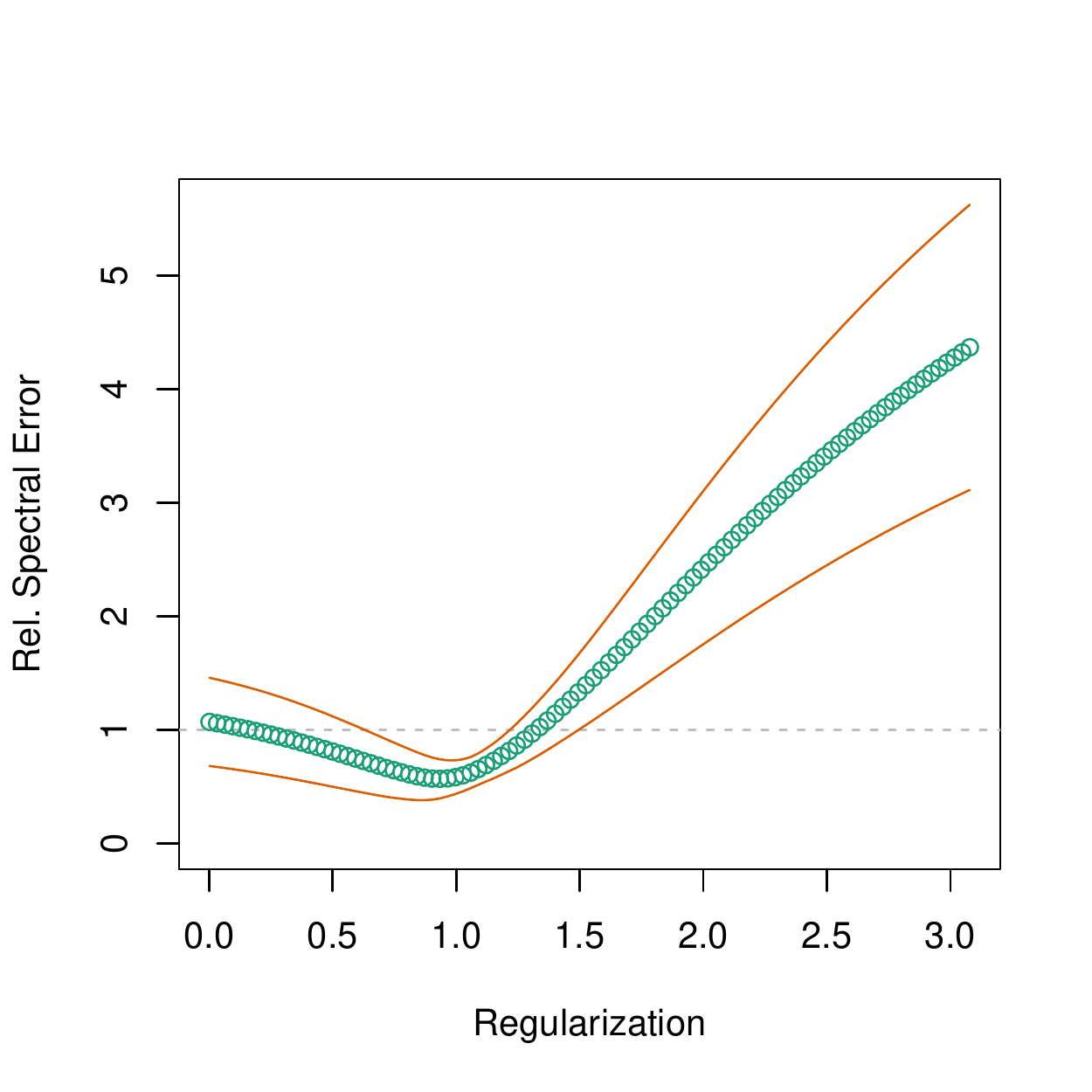}}
  }
  \vspace{-1em}
\caption{Regularization performance, as measured with the relative 
spectral norm error, versus the (normalized) regularization parameter 
$\eta / \bar \tau$.
Shown are plots for various values of the (normalized) number of edges, 
$m/\mu$, and the edge-swap parameter, $s$.
Recall that the regularization parameter in the regularized 
SDP~(\ref{eqn:mo-reg-sdp}) is $1/\eta$, and thus smaller values along the 
X-axis correspond to stronger regularization.}
\label{fig:perf-spec}
\end{figure}

\end{document}